\def\u2#1{\underline{\underline{#1}}} 
\newcommand{\bs}[1]{\boldsymbol{#1}} 
\newcommand{\subscript}[2]{$#1 _ #2$} 
\def\colorize<#1>{\temporal<#1>
{\color{black!30}}
{\color{red}}
{\color{black}}
}
\definecolor{dark_blue}{RGB}{0,9,129}
\definecolor{dark_green}{RGB}{18,172,88} 
\definecolor{dark_blue_cite}{RGB}{0,84,168}
\definecolor{light_grey}{RGB}{190,190,190}
\definecolor{RedDark}{RGB}{139,0,0}
\definecolor{Red}{RGB}{255,0,0}
\definecolor{GreenDark}{RGB}{34,139,34} 
\definecolor{Green}{RGB}{50,205,50} 
\definecolor{Blueone}{RGB}{0,96,173} 
\definecolor{Redone}{RGB}{173,77,0} 
\DeclareRobustCommand{\volfrac}[1]{%
    \IfEqCase{#1}{%
        {H2O}{\alpha_{H_{2}O}}%
        {g}{\alpha_{g}}%
        {l}{\alpha_{l}}%
        {G}{\alpha_{G}}%
        {L}{\alpha_{L}}%
        {1}{\alpha_{1}}%
        {2}{\alpha_{2}}%
        {k}{\alpha_{k}}%
        {both}{\alpha}%
    }[\PackageError{volfrac}{Undefined option to volfrac: #1}{}]%
}%
\DeclareRobustCommand{\speed}[1]{%
    \IfEqCase{#1}{%
        {vk}{\bs{v}_{k}}%
        {sk}{v_{k}}%
        {sint}{v_{I}}%
        {v}{\bs{v}}%
        {s}{v}%
        {s1}{v_{1}}%
        {s2}{v_{2}}%
    }[\PackageError{speed}{Undefined option to speed: #1}{}]%
}%
\newcommand{\pI}{p_{I}}     
\newcommand{\fontscal}[1]{#1} 
\newcommand{\fontscalemp}[1]{\mathsf{#1}} 
\newcommand{\fontvec}[1]{\bs{\mathrm{#1}}} 
\newcommand{\fontmat}[1]{\bs{\mathcal{#1}}} 
\newcommand{\fonttens}[1]{\bs{\mathcal{#1}}} 
\newcommand{\CV}{\fontvec{u}}    
\newcommand{\CF}{\fontvec{f}}    
\newcommand{\NCF}{\fontmat{N}}             
\newcommand{\DF}{\fontmat{D}}             
\newcommand{\NCFone}{\fontvec{n}_{1}}     
\newcommand{\NCFtwo}{\fontvec{n}_{2}}
\newcommand{\NCFk}{\fontvec{n}_{k}}
\newcommand{\CFsym}{\fontmat{C}_{1}}       
\newcommand{\CFnsym}{\fontmat{Z}_{1}}      
\newcommand{\NCFsym}{\fontmat{C}_{2}}      
\newcommand{\NCFnsym}{\fontmat{Z}_{2}}     
\newcommand{\Fsymk}{\fontmat{C}_{k}}
\newcommand{\Fnsymk}{\fontmat{Z}_{k}}
\newcommand{\MES}{\fontscalemp{H}}          
\newcommand{\MEV}{\fontvec{v}}           
\newcommand{\MEValpha}{\mathrm{v_{\alpha}}}  
\newcommand{\PES}{s}                
\newcommand{\ETF}{\fontscalemp{G}}       
\newcommand{\transmatrix}{\fontmat{T}}
\DeclareRobustCommand{\transfer}[1]{%
    \IfEqCase{#1}{%
        {salpha}{\fontscal{t_{\alpha}}} 
        {s1}{\fontscal{t_{1}}} 
        {s2}{\fontscal{t_{2}}} 
        {sk}{\fontscal{t_{k}}} 
        {v}{\fontvec{t}} 
        {m}{\fontmat{T}}%
    }[\PackageError{transfer}{Undefined option to transfer: #1}{}]%
}%
\newcommand{\spaces}[1]{%
    \IfEqCase{#1}{%
        {M}{\mathcal{M}}%
        {TM}{\mathcal{TM}}%
        {TqM}{\mathcal{T}_{\pathM}\mathcal{M}}%
        {TR}{\mathcal{TR}}%
        {R7}{\mathbb{R}^{7}}%
        {R77}{\mathbb{R}^{7\times7}}%
        {R3}{\mathbb{R}^{3}}%
        {R5}{\mathbb{R}^{5}}%
        {R}{\mathbb{R}}%
        {Rp}{\mathbb{R}^{p}}%
        {Rpp}{\mathbb{R}^{p\times p}}%
    }[\PackageError{space}{Undefined option to space: #1}{}]%
}%
\newcommand{\trans}{T} 
\newcommand{\citeay}[1]{\autocite{#1}} 
\newcommand{\specialcell}[2][c]{%
  \begin{tabular}[#1]{@{}c@{}}#2\end{tabular}}
\begin{document}
%
%
 \title{Entropy supplementary conservation law for non-linear systems of PDEs with non-conservative terms: application to the modelling and analysis of complex fluid flows using computer algebra\thanks{Received date, and accepted date (The correct dates will be entered by the editor).}}


          \author{Pierre CORDESSE\thanks{ONERA, DMPE, 8 Chemin de la Huni\`ere, 91120 Palaiseau, France, and CMAP, Ecole polytechnique, Route de Saclay 91128 Palaiseau Cedex, France, (pierre.cordesse@polytechnique.edu)}
          \and Marc MASSOT\thanks{CMAP, Ecole polytechnique, Route de Saclay, 91128 Palaiseau Cedex, France, (marc.massot@polytechnique.edu)}
          }

         \pagestyle{myheadings} \markboth{Entropy conservation law for nonlinear systems of PDEs with non-conservative terms}{CORDESSE, P. and MASSOT, M.} \maketitle
\begin{abstract}
In the present contribution, we investigate first-order nonlinear systems of partial differential equations which are constituted of two parts: a system of conservation laws and non-conservative first order terms.
Whereas the theory of first-order systems of conservation laws is well established and the conditions for the existence of supplementary conservation laws, and more specifically of an entropy supplementary conservation law for smooth solutions, well known, there exists so far no general extension to obtain  such supplementary conservation laws when non-conservative terms are present.
We propose a framework in order to extend the existing theory and show that the presence of non-conservative terms somewhat complexifies the problem since numerous combinations of the conservative and non-conservative terms can lead to a supplementary conservation law.
We then identify a restricted framework in order to design and analyze physical models of complex fluid flows by means of computer algebra and thus obtain the entire ensemble of possible combination of conservative and non-conservative terms with the objective of  obtaining specifically an entropy supplementary conservation law.
The theory as well as developed computer algebra tool are then applied to a Baer-Nunziato two-phase flow model and to a multicomponent plasma fluid model. The first one is a first-order fluid model, with non-conservative terms impacting on the linearly degenerate field and requires a closure since there is no way to derive interfacial quantities from averaging principles and we need guidance in order to close the pressure and velocity of the interface and the thermodynamics of the mixture. The second one involves first order terms for the heavy species coupled to second order terms for the electrons, the non-conservative terms impact the genuinely nonlinear fields and the model can be rigorously derived from kinetic theory. We show how the theory allows to recover the whole spectrum of closures obtained so far in the literature for the two-phase flow system as well as conditions when one aims at extending the thermodynamics and also applies to the plasma case, where we recover the usual entropy supplementary equation, thus assessing the effectiveness and scope of the proposed theory.
\end{abstract}

\begin{keywords} Nonlinear PDEs with non-conservative terms, supplementary conservation law, entropy, computer algebra, two-phase flow, Baer-Nunziato model, multicomponent plasma fluid model
\end{keywords}

\begin{AMS} 35L60; 68W30; 76N15; 76T10; 82D10
\end{AMS}

\section{Introduction}

First-order nonlinear systems of partial differential equations and more specifically systems of conservation laws have been the subject of a vast literature since the second half of the twentieth century because they are ubiquitous in mathematical modelling of fluid flows and are used extensively for numerical simulation in applications and industrial context \citeay{Bissuel_2018, Gaillard_2016}. %
Such systems of equation can either be rigorously derived from kinetic theory of gases through various expansion techniques \citeay{Ferziger_1972,Woods_1975}, or can be derived using rational thermodynamics and fluid mechanics including stationary action principle (SAP)  \citeay{Serrin_1959,Landau_1976,Truesdell_1969}. As far as Euler or Navier-Stokes equations are concerned for a gaseous flow field, the outcome of both approaches are similar and the mathematical properties of these systems have been thoroughly investigated for the past decades. %

An interesting related problem is the quest for supplementary conservation laws. Noether's theorem \citeay{Olver_1986} leads, within the framework of SAP, to the derivation of supplementary conservation laws based on symmetry transformations of the variational problem under investigation\footnote{Among the most well-known symmetry transformations, the time translation yields the conservation of the total energy of the system if the associated Lagrangian is invariant to time-shift and the space translation yields the conservation of the total momentum of the system if the Lagrangian is invariant to space-shift}. Examples of such derivations on two-phase flow modelling can be found in \citeay{Gavrilyuk_Saurel_2002, Drui_JFM_2019}. However, to the authors knowledge, no symmetry transformations have been identified yielding a conservative law on the entropy of the system. In fact, SAP does not allow to reach a closed system of equations, and one has to provide a closure for the entropy (see \citeay{Gouin_2009} for example). A specific type of supplementary conservation equation for smooth solution is especially important, namely the \emph{entropy equation}, derived through the theory developed in \citeay{Godunov_1961,Friedrichs_1971} for systems of conservation laws. Such systems of PDEs are hyperbolic at any point where a locally convex entropy function exists \citeay{Mock_1980}, and when they are equipped with a strictly convex entropy, they can be symmetrized \citeay{Friedrichs_1971} \citeay{Harten_Hyman_1983} and thus are hyperbolic. These properties have been at the heart of the mathematical theory of existence and uniqueness of smooth solutions \citeay{Kawashima_1988} \citeay{Giovangigli_1998}, but they are also a corner stone for the study of weak solutions for which the work of \citeay{Kruzkov_1970} proves the well-posedness of Cauchy problem for one-dimensional systems.

Nonetheless, for a number of applications, where reduced-order fluid models have to be used for tractable mathematical modelling and numerical simulations, be it in the industry or in other disciplines, micro-macro kinetic-theory-like approaches as well as rational thermodynamics and SAP approaches often lead to system of conservation laws involving \emph{non-conservative terms}. Among the large spectrum of applications, we focus on two types of models, which exemplify the two approaches: %
1- two phase flows models which rely on a hierarchy of diffuse interface models among which stands the Baer-Nunziato \citeay{Baer_Nunziato_1986} model used when full disequilibrium of the phases must be taken into account. Since this model is derived through rational thermodynamics, the macroscopic set of equations can not be derived from physics at small scale of interface dynamics and thus require closure of interfacial pressure and velocity, %
2- multicomponent fluid modelling of plasmas flows out of thermal equilibrium, where the equations can be derived rigorously from kinetic theory using a multi-scale Chapman-Enskog expansion mixing a hyperbolic scaling for the heavy species and a parabolic scaling for the electrons \citeay{Graille_2007}. %
Concerning the thermodynamics, whereas for the first model it has to be postulated and requires assumptions, it can be obtained from kinetic theory in the second model. In both cases, the models involve non-conservative terms, but these terms do not act on the same fields; linearly degenerate field is impacted for the two-phase flow model, whereas it acts on the genuinely nonlinear fields in the second \citeay{Wargnier_2018}. Whereas hyperbolicity depends on the closure and is not guaranteed for the first class of models \citeay{Gallouet_2004}, the second is naturally hyperbolic \citeay{Graille_2007} and also involves second-order terms and eventually source terms \citeay{Magin_2009}. 

Thus, the presence of \emph{non-conservative terms} encompasses several situations and requires a general theoretical framework. 
While Noether's theorem can still applied to obtain some supplementary conservation laws, it does not permit to exhibit all of them and especially not an entropy supplementary conservation law. A unifying theory extending the standard approach for systems of conservations laws (entropy supplementary conservation law, entropic symmetrization, Godunov-Mock theorem, hyperbolicity) is still missing for such systems even if some key advances exist. The system has been shown to be symmetrizable by \citeay{Coquel_2013} -- not in the sens of Godunov-Mock -- far from the resonance condition for which hyperbolicity degenerates. In \citeay{Forestier_2011}, the model is proved to be partially symmetrizable in the sense of Godunov-Mock.
%

The present paper first proposes an extension of the theory of supplementary conservation laws for system of conservation laws to first-order nonlinear systems of partial differential equations which are constituted of two parts: a system of conservation laws and \emph{non-conservative first order terms}.%
We emphasize how the presence of non-conservative terms somewhat complexifies the problem since numerous combinations of the conservative and non-conservative terms can lead to supplementary conservation laws. 
%
%
We then identify a restricted framework in order to design and analyze physical models of complex fluid flows by means of computer algebra and thus obtain the entire ensemble of possible combination of conservative and non-conservative terms to obtain an entropy supplementary conservation law. %
%
%
The proposed theoretical approach is then applied to the two systems identified so far for their diversity of behaviour. Even if the whole theory is valid for any supplementary conservation law, we focus on obtaining an \emph{entropy} supplementary conservation law. %
For the two-phase flow model, assuming a thermodynamics of non-miscible phases, we derive conditions to obtain an entropy supplementary conservative equation together with a compatible thermodynamics and closures for the non-conservative terms. Interestingly enough, all the closures proposed so far in the literature are recovered \citeay{Baer_Nunziato_1986, Kapila_1997, Bdzil_1999, Lochon_PhdThesis_2016, Saurel_Gavrilyuk_2003}. %
The strength of the formalism lies also in the capacity to derive such conditions for some level of mixing of the phases. By introducing a mixing term in the definition of the entropy, the new theory brings out constraints on the form of the added mixing term. We recover not only the closure proposed to account for a configuration energy as in the context of deflagration-to-detonation \citeay{Baer_Nunziato_1986} or in \citeay{Coquel_2002}, but we also rigorously find new closures leading to a conservative system of equations\footnote{Such closure is similar to the one used in \citeay{Powers_1988, Powers_1990} which led to a controversy \citeay{Drew_1983, Bdzil_1999, Andrianov_2003}}. %
%
%
We also prove that the theory encompasses the plasma case, where we recover the usual \emph{entropy} supplementary equation assessing the effectiveness and scope of the proposed theory.

%
%

The paper is organized as follows. The extension of the theory for system of conservation laws to first-order nonlinear systems of partial differential equations 
including non-conservative terms, as well as the framework to apply the theory by means of computer algebra are introduced in Section~\ref{sec:theory}. These results are then applied first to the Baer-Nunziato model in Section~\ref{sec:BNZ} and then to the plasma model in Section~\ref{sec:plasma} to obtain an entropy supplementary conservation law compatible with the model closure.

\textbf{Notations:} Let $\bs{a} \in \spaces{R}^{p}$, $\bs{b} \in \spaces{R}^{p}$, $\mathcal{B} \in \spaces{R}^{p\times p}$, $\mathcal{C} \in \spaces{R}^{p\times p}$, $\fonttens{D} \in \spaces{R}^{p\times p \times p}$  be a $p$-component line first-order tensor, a $p$-component column first-order tensor, two $p$-square second-order tensor and a third-order tensor respectively. We introduce the following notations:
\begin{itemize}
  \item $\bs{a} \mathcal{B}$ is a line first-order tensor in $\spaces{Rp}$ whose $i$ component are defined by
  \begin{align}
    \left(\bs{a} \mathcal{B}\right)_{i} = \sum_{j=1,p} \bs{a}_{j}  \mathcal{B}_{j,i},
  \end{align}
  \item $\mathcal{B} \bs{b}$ is a column first-order tensor in $\spaces{Rp}$ whose $i$ component is defined by
  \begin{align}
    \left(\mathcal{B} \bs{b}\right)_{i} = \sum_{j=1,p} \mathcal{B}_{i,j}\bs{b}_{j},
  \end{align}
  \item $\mathcal{B} \times \mathcal{C}$ is $p$-square second-order tensor whose $(i,j)$ component is defined by
  \begin{align}
    \left(\mathcal{B} \times \mathcal{C}\right)_{i,j} = \sum_{k=1,p} \mathcal{B}_{i,k} \mathcal{C}_{k,j},
  \end{align}
  \item $\bs{a} \otimes \mathbb{D}$ is a $p$-square second-order tensor whose $(i,j)$ component is defined by
  \begin{align}
    \left( \bs{a} \otimes \mathbb{D} \right)_{(i,j)} = \sum_{k=1,p} \bs{a}_{k} \times \mathbb{D}_{k,i,j}.
  \end{align}  
\end{itemize}
Hereafter, we will name zero- first- and second-order tensors by scalar, vector and matrix respectively and for convenience we will use vector and matrix representations of functions. Moreover, given a scalar function $S$, the partial differentiation of $S$ by a column vector $\bs{a}$, $\partial_{\bs{a}}S$ is a line vector in $\spaces{Rp}$. Finally, $\cdot$ denotes the Euclidean scalar product in $\spaces{Rp}$.
\section{Supplementary conservation law}\label{sec:theory}

First we recall the theory of the existence of a supplementary conservative equation for first-order nonlinear systems of conservation laws. Second, this notion is extended to systems containing first order non-conservative terms. Third, we introduce a framework to apply this new theory to design and analyze physical models using computer algebra.

A one-dimensional framework is adopted from now on, $x \in \spaces{R}$, in order to simplify the derivation. Nonetheless, the results 
can easily be extended to the multi-dimensional approach as presented in  \citeay{Godlewski_1996} for systems of conservation laws.
\subsection{First-order nonlinear conservative systems.}\label{ssec:CS_theory}
The homogeneous form of a first-order nonlinear system of $p$ conservation laws writes
\begin{align}\label{eq:cons_syst_non_linear}
  \partial_{t} \CV + \partial_{x} \CF(\CV) = \fontvec{0},
\end{align}
where $\CV \in \Omega \subset \spaces{R}^{p}$ denotes the conservative variables with $\Omega$ an open convex of $\spaces{R}^{p}$ and $\CF: \CV \in \Omega \mapsto \spaces{R}^{p}$ the conservative fluxes. Focusing on smooth solution of the system~\eqref{eq:cons_syst_non_linear}, its quasi-linear form is given by
\begin{align}\label{eq:cons_syst_quasi_linear}
  \partial_{t} \CV + \partial_{\CV} \CF(\CV) \, \partial_{x} \CV = \fontvec{0}.
\end{align}
\begin{theorem} \label{theo:cons_syst_SCE}
Let $\MES: \CV \in \Omega \mapsto \spaces{R}$ be a scalar function, not necessarily convex. The following statements are equivalent:
\begin{enumerate}[label=\text{\normalfont (}\subscript{C}{\arabic*}\text{\normalfont )},noitemsep]
  \item System~\eqref{eq:cons_syst_non_linear} admits a supplementary conservative equation
  \begin{align}\label{eq:cons_syst_entropy_eq}
    \partial_{t} \MES(\CV) + \partial_{x} \ETF(\CV) = \fontvec{0},
  \end{align}
  where $\CV \in \spaces{R}^{p}$ is a smooth solution of System~\eqref{eq:cons_syst_non_linear} and $\ETF: \CV \in \Omega \mapsto \spaces{R}$ is a scalar function.
  \item There exists a scalar function $\ETF: \CV \in \Omega \mapsto \spaces{R}$ such that 
  \begin{align}\label{eq:cons_syst_compatibility_eq}
    \partial_{\CV} \MES(\CV) \, \partial_{\CV} \CF(\CV) = \partial_{\CV} \ETF(\CV).
  \end{align}
  \item $\partial_{\CV\CV} \MES(\CV) \times \partial_{\CV} \CF(\CV)$ is a $p$-square symmetric matrix.
\end{enumerate}
\end{theorem}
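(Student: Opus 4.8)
The plan is to prove the cycle $(\mathrm{C}_1)\Rightarrow(\mathrm{C}_2)\Rightarrow(\mathrm{C}_3)\Rightarrow(\mathrm{C}_2)\Rightarrow(\mathrm{C}_1)$, using nothing more than the chain rule on smooth solutions for the first link and the Poincar\'e lemma on the convex set $\Omega$ for the second. For $(\mathrm{C}_1)\Leftrightarrow(\mathrm{C}_2)$, I would start from a smooth solution $\CV$ of \eqref{eq:cons_syst_non_linear} and differentiate along it: by the chain rule and the quasi-linear form \eqref{eq:cons_syst_quasi_linear},
$\partial_{t}\MES(\CV)=\partial_{\CV}\MES(\CV)\cdot\partial_{t}\CV=-\bigl(\partial_{\CV}\MES(\CV)\,\partial_{\CV}\CF(\CV)\bigr)\cdot\partial_{x}\CV$, while $\partial_{x}\ETF(\CV)=\partial_{\CV}\ETF(\CV)\cdot\partial_{x}\CV$. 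Hence \eqref{eq:cons_syst_entropy_eq} is equivalent to $\bigl(\partial_{\CV}\ETF(\CV)-\partial_{\CV}\MES(\CV)\,\partial_{\CV}\CF(\CV)\bigr)\cdot\partial_{x}\CV=0$ holding along every smooth solution. Choosing, for a given $\bar{\CV}\in\Omega$ and an arbitrary direction $\bs{w}\in\spaces{Rp}$, a smooth solution whose value and spatial derivative at one point are $\bar{\CV}$ and $\bs{w}$, one concludes that the covector in brackets annihilates every $\bs{w}$, hence vanishes on $\Omega$, which is \eqref{eq:cons_syst_compatibility_eq}. The converse is immediate: if \eqref{eq:cons_syst_compatibility_eq} holds then, for the same $\ETF$, $\partial_{t}\MES(\CV)+\partial_{x}\ETF(\CV)=\partial_{\CV}\MES(\CV)\cdot\bigl(\partial_{t}\CV+\partial_{\CV}\CF(\CV)\,\partial_{x}\CV\bigr)=\fontvec{0}$.

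For $(\mathrm{C}_2)\Leftrightarrow(\mathrm{C}_3)$, I would introduce the line-vector (covector) field $\bs{g}(\CV):=\partial_{\CV}\MES(\CV)\,\partial_{\CV}\CF(\CV)$ on $\Omega$. Statement $(\mathrm{C}_2)$ says precisely that $\bs{g}$ is the differential of a scalar function $\ETF$, i.e. that $\bs{g}$ is exact; since $\Omega$ is open and convex, hence contractible, the Poincar\'e lemma turns this into the closedness condition that the matrix $\partial_{\CV}\bs{g}$ be symmetric. It then remains to compute $\partial_{\CV}\bs{g}$ in coordinates: with $\bs{g}_{i}=\sum_{j}\partial_{\CV_{j}}\MES\,\partial_{\CV_{i}}\CF_{j}$ one gets $\partial_{\CV_{m}}\bs{g}_{i}=\sum_{j}\partial_{\CV_{m}\CV_{j}}\MES\,\partial_{\CV_{i}}\CF_{j}+\sum_{j}\partial_{\CV_{j}}\MES\,\partial_{\CV_{m}\CV_{i}}\CF_{j}$, where the second sum is symmetric in $(i,m)$ by Schwarz's theorem applied to $\CF$, while the first sum is exactly the $(m,i)$ entry of $\partial_{\CV\CV}\MES\times\partial_{\CV}\CF$. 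Therefore $\partial_{\CV}\bs{g}$ is symmetric if and only if $\partial_{\CV\CV}\MES\times\partial_{\CV}\CF$ is, which is $(\mathrm{C}_3)$; and conversely closedness plus the Poincar\'e lemma on $\Omega$ produces the function $\ETF$ of $(\mathrm{C}_2)$.

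The matrix/index bookkeeping in the second step is routine provided one is careful with the line/column conventions fixed in the Notations section. The one genuinely delicate point is the pointwise conclusion in $(\mathrm{C}_1)\Rightarrow(\mathrm{C}_2)$: one needs, for every $(\bar{\CV},\bs{w})\in\Omega\times\spaces{Rp}$, a smooth solution realizing this $1$-jet at a point, so that the identity, which is linear in $\partial_{x}\CV$, forces its coefficient covector to vanish. This is where an existence input enters — Cauchy–Kovalevskaya in the analytic category, or the ambient ``sufficiently many smooth solutions'' hypothesis otherwise. Equivalently one may note that only the $1$-jet at a single point is involved, since the system itself expresses $\partial_{t}\CV$ through $\CV$ and $\partial_{x}\CV$; this observation makes the argument essentially algebraic and will be the template to reuse when non-conservative terms are added.
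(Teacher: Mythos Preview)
Your argument is correct and matches the paper's approach: the paper defers $(\mathrm{C}_1)\Leftrightarrow(\mathrm{C}_2)$ and $(\mathrm{C}_3)\Rightarrow(\mathrm{C}_2)$ to the literature and only details $(\mathrm{C}_2)\Rightarrow(\mathrm{C}_3)$ by differentiating \eqref{eq:cons_syst_compatibility_eq} and observing that $\partial_{\CV}\MES\otimes\partial_{\CV\CV}\CF$ is a linear combination of Hessians, hence symmetric---exactly your computation. Your additional care in invoking the Poincar\'e lemma on the convex $\Omega$ for the converse and in isolating the $1$-jet argument for $(\mathrm{C}_1)\Rightarrow(\mathrm{C}_2)$ simply fills in what the paper leaves implicit.
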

\begin{proof}
The proofs of the theorem can be found in the literature. We would like to recall how the last statement is obtained. Assuming $(C_{2})$, differentiating Equation~\eqref{eq:cons_syst_compatibility_eq} leads to
\begin{align}\label{eq:cons_syst_sym_cond}
  \partial_{\CV\CV} \MES(\CV) \times \partial_{\CV} \CF(\CV) + \partial_{\CV} \MES(\CV) \otimes \partial_{\CV\CV} \CF(\CV) = \partial_{\CV\CV} \ETF(\CV),
\end{align}
where $\partial_{\CV} \MES(\CV)\otimes \partial_{\CV\CV} \CF(\CV)$ is a $p$-square matrix defined as $\sum_{i} \partial_{\CV_{i}} \MES(\CV) \, \partial_{\CV \CV} \CF_{i}(\CV)$ which is a linear combination of Hessian matrices and hence symmetric. Moreover, the RHS of Equation~\eqref{eq:cons_syst_sym_cond} $\partial_{\CV\CV} \ETF(\CV)$ is symmetric. Therefore  $\partial_{\CV\CV} \MES(\CV)\times \partial_{\CV} \CF(\CV)$ is symmetric.

\end{proof}

Theorem~\ref{theo:cons_syst_SCE} applies for any type of supplementary conservative equations and other formulations of Theorem~\ref{theo:cons_syst_SCE} can be found in the literature \citeay{Harten_Hyman_1983, Godlewski_1996, Despres_2005}.
\subsection{Extension to systems involving non-conservative terms.}\label{ssec:NC_theory}
Let us now consider the homogeneous form of a first-order nonlinear system of partial differential equations constituted of two parts: conservations laws and first-order non-conservative terms. Its quasi-linear form can be written as
\begin{align}\label{eq:NC_syst}
  \partial_{t} \CV + \left[ \partial_{\CV} \CF(\CV) + \NCF(\CV) \right] \partial_{x} \CV = \fontvec{0},
\end{align}
where $\CV \in \Omega \subset \spaces{R}^{p}$ is a smooth solution with $\Omega$ an open convex of $\spaces{R}^{p}$, $\CF: \CV \in \Omega \mapsto \spaces{Rp}$ the conservative fluxes, $\NCF:\CV \in \Omega \mapsto \spaces{Rpp}$ the $p$-square matrix containing the first-order non-conservative terms.

In the following we extend the theory introduced in Section~\ref{ssec:CS_theory} to system~\eqref{eq:NC_syst}.
%
%
Given a scalar function $\MES: \CV \in \Omega \mapsto \spaces{R}$, multiplying system~\eqref{eq:NC_syst} by the line vector $\partial_{\CV} \MES(\CV)$ yields
\begin{align}\label{eq:NC_syst_SCE}
  \partial_{t} \MES + \partial_{\CV} \MES(\CV)  \left[ \partial_{\CV} \CF(\CV) + \NCF(\CV) \right] \partial_{x}\CV = 0.
\end{align}
Compared to Equation~\eqref{eq:cons_syst_entropy_eq}, the presence of the non-conservative terms in Equation~\eqref{eq:NC_syst_SCE} complexifies the question of the existence of a supplementary conservative equation. Therefore we propose to decompose in a specific way the conservative and non-conservative terms in Definition~\ref{def:decomposition}.
\begin{definition}\label{def:decomposition} Given a scalar function $\MES: \CV \in \Omega \mapsto \spaces{R}$ and a first-order nonlinear non-conservative system~\eqref{eq:NC_syst}, let us define the four $p$-square matrices, $\CFsym(\CV)$, $\CFnsym(\CV)$, $\NCFsym(\CV)$ and $\NCFnsym(\CV)$ in $\spaces{R}^{p \times p}$ such that
\begin{align}
  \partial_{\CV} \CF(\CV) & = \CFsym(\CV) + \CFnsym(\CV), \\
  \NCF(\CV) & = \NCFsym(\CV)+\NCFnsym(\CV),
\end{align}
with the condition
\begin{align}\label{eq:decomposition_condition}
  \partial_{\CV} \MES(\CV) \left[ \CFnsym(\CV) + \NCFnsym(\CV) \right] = \fontvec{0}.
\end{align}
\end{definition}
%
In light of Definition~\ref{def:decomposition}, Theorem~\ref{theo:cons_syst_SCE} can be extended as follows:
\begin{theorem}\label{theo:NC_syst_SCE}
Let $\MES: \CV \in \Omega \mapsto \spaces{R}$ be a scalar function, not necessarily convex. Given a first-order nonlinear system of non-conservation laws \eqref{eq:NC_syst}, if we introduce the decomposition as in Definition~\ref{def:decomposition}, then the following statements are equivalent:
\begin{enumerate}[label=(\subscript{C}{\arabic*}),noitemsep]
  \item System~\eqref{eq:NC_syst} admits a supplementary conservative equation
  \begin{align}\label{eq:NC_syst_entropy_eq}
    \partial_{t} \MES(\CV) + \partial_{x} \ETF(\CV) = 0,
  \end{align}
  where $\CV \in \spaces{R}^{p}$ is a smooth solution of System~\eqref{eq:NC_syst} and $\ETF: \CV \in \Omega \mapsto \spaces{R}$ is a scalar function.
  \item There exists a scalar function $\ETF: \CV \in \Omega \mapsto \spaces{R}$ such that 
  \begin{align}\label{eq:NC_syst_compatibility_eq}
    \partial_{\CV} \MES(\CV) \left[ \CFsym(\CV) + \NCFsym(\CV) \right]  &= \partial_{\CV} \ETF(\CV).
  \end{align}
  \item $\partial_{\CV\CV} \MES(\CV)\times \left[ \CFsym(\CV) + \NCFsym(\CV) \right]+ \partial_{\CV} \MES(\CV)\otimes \partial_{\CV} \left[\CFsym(\CV) + \NCFsym(\CV) \right] $ is a $p$-square symmetric matrix.
\end{enumerate}
\end{theorem}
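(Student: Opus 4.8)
The plan is to mirror the two-step argument used for Theorem~\ref{theo:cons_syst_SCE} — first $(C_1)\Leftrightarrow(C_2)$, then $(C_2)\Leftrightarrow(C_3)$ — while carrying the extra non-conservative matrix through the computation. To set things up I would record the simplification produced by Definition~\ref{def:decomposition}: left-multiplying \eqref{eq:NC_syst} by the line vector $\partial_{\CV}\MES(\CV)$ and using the chain rule $\partial_{\CV}\MES(\CV)\,\partial_{t}\CV=\partial_{t}\MES(\CV)$ gives \eqref{eq:NC_syst_SCE}, and condition \eqref{eq:decomposition_condition} then annihilates the $\CFnsym+\NCFnsym$ contribution, so that along a smooth solution
\begin{align}
  \partial_{t}\MES(\CV)+\partial_{\CV}\MES(\CV)\left[\CFsym(\CV)+\NCFsym(\CV)\right]\partial_{x}\CV=0.
\end{align}
I would then introduce the line vector $\bs{g}(\CV):=\partial_{\CV}\MES(\CV)\left[\CFsym(\CV)+\NCFsym(\CV)\right]$ and observe, again by \eqref{eq:decomposition_condition}, that $\bs{g}(\CV)=\partial_{\CV}\MES(\CV)\left[\partial_{\CV}\CF(\CV)+\NCF(\CV)\right]$; hence $\bs{g}$, and therefore its Jacobian, are the same for every admissible decomposition, even though the decomposition itself is far from unique (this non-uniqueness being exactly the source of the ``numerous combinations'' advertised in the introduction). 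This remark is what legitimises equating the decomposition-free statement $(C_1)$ with the decomposition-dependent-looking $(C_2)$ and $(C_3)$.

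For $(C_1)\Leftrightarrow(C_2)$: if $(C_2)$ holds then $\partial_{x}\ETF(\CV)=\partial_{\CV}\ETF(\CV)\,\partial_{x}\CV=\bs{g}(\CV)\,\partial_{x}\CV$, so the displayed identity becomes $\partial_{t}\MES(\CV)+\partial_{x}\ETF(\CV)=0$, which is $(C_1)$. Conversely, if $(C_1)$ holds then $\left(\partial_{\CV}\ETF(\CV)-\bs{g}(\CV)\right)\partial_{x}\CV=0$ for every smooth solution; since at any point of $\Omega$ one may prescribe the value of $\CV$ together with an arbitrary value of $\partial_{x}\CV$ by a suitable choice of (say, local analytic) initial data — exactly the point already needed for Theorem~\ref{theo:cons_syst_SCE}, for which we refer to the literature cited there — this forces $\partial_{\CV}\ETF=\bs{g}$ on $\Omega$, i.e. $(C_2)$.

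For $(C_2)\Leftrightarrow(C_3)$: differentiating \eqref{eq:NC_syst_compatibility_eq} with respect to $\CV$ and applying the product rule, exactly as \eqref{eq:cons_syst_sym_cond} was obtained from \eqref{eq:cons_syst_compatibility_eq}, gives
\begin{align}
  \partial_{\CV\CV}\MES(\CV)\times\left[\CFsym(\CV)+\NCFsym(\CV)\right]+\partial_{\CV}\MES(\CV)\otimes\partial_{\CV}\left[\CFsym(\CV)+\NCFsym(\CV)\right]=\partial_{\CV\CV}\ETF(\CV),
\end{align}
whose right-hand side is a Hessian, hence symmetric, which proves $(C_2)\Rightarrow(C_3)$. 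Its left-hand side is precisely $\partial_{\CV}\bs{g}(\CV)$, so $(C_3)$ is exactly the statement that the Jacobian of $\bs{g}$ is symmetric; since $\Omega$ is open and convex, hence simply connected, the Poincar\'e lemma then furnishes a scalar potential $\ETF$ with $\partial_{\CV}\ETF=\bs{g}$, which is $(C_2)$.

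The step I expect to need the most care is the single place where this departs from Theorem~\ref{theo:cons_syst_SCE}: in the conservative case the cross term $\partial_{\CV}\MES\otimes\partial_{\CV\CV}\CF=\sum_{i}\partial_{\CV_{i}}\MES\,\partial_{\CV\CV}\CF_{i}$ is a linear combination of Hessians and may therefore be cancelled against $\partial_{\CV\CV}\ETF$, leaving only ``$\partial_{\CV\CV}\MES\times\partial_{\CV}\CF$ symmetric''. Here $\CFsym+\NCFsym$ is a general matrix field rather than a Jacobian, so $\partial_{\CV}[\CFsym+\NCFsym]$ carries no symmetry in its last two indices, the cross term $\partial_{\CV}\MES\otimes\partial_{\CV}[\CFsym+\NCFsym]$ genuinely survives, and $(C_3)$ must retain it; the argument above is organised precisely so that this term is never handled separately but only ever as part of $\partial_{\CV}\bs{g}$. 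The only other ingredient to keep honest is the local-solvability fact invoked in $(C_1)\Rightarrow(C_2)$, which I would cite by analogy with the conservative theory rather than reprove.
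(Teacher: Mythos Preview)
Your proof is correct and follows the same route as the paper: rewrite \eqref{eq:NC_syst_SCE} via the decomposition so that condition~\eqref{eq:decomposition_condition} kills the $\CFnsym+\NCFnsym$ part, then run the argument of Theorem~\ref{theo:cons_syst_SCE} with $\CFsym+\NCFsym$ in place of $\partial_{\CV}\CF$. The paper's proof is a one-line sketch (``outlines the result''), whereas you spell out both equivalences and add the useful observation that $\bs{g}=\partial_{\CV}\MES\,[\partial_{\CV}\CF+\NCF]$ is decomposition-independent, which the paper does not make explicit.
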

\begin{proof}
Rewriting Equation~\eqref{eq:NC_syst_SCE} using the decomposition of the conservative and non-conservative terms as
\begin{align}
  \partial_{t} \MES(\CV) + \partial_{\CV} \MES(\CV) \left[ \CFsym(\CV) + \NCFsym(\CV) \right] \partial_{x}\CV = - \partial_{\CV} \MES(\CV) \left[ \CFnsym(\CV) + \NCFnsym(\CV) \right] \partial_{x}\CV
\end{align}
outlines the result.
\end{proof}
\begin{remark}\label{remark:symmetry_condition}Theorem~\ref{theo:NC_syst_SCE} applies for any type of supplementary conservative equations. The usual symmetry condition on which relies the existence of a supplementary conservation equation is strongly modified when non-conservation terms are present. From Theorem~\ref{theo:cons_syst_SCE} to Theorem~\ref{theo:NC_syst_SCE} the condition
\vspace{-1em}
\begin{align*}
  \partial_{\CV\CV} \MES(\CV) \times \partial_{\CV} \CF(\CV) \text{ symmetric},
\end{align*}
is modified into
\vspace{-1em}
\begin{align*}
  \partial_{\CV\CV} \MES(\CV)\times \left[ \CFsym(\CV) + \NCFsym(\CV) \right]+ \partial_{\CV} \MES(\CV)\otimes \partial_{\CV} \left[\CFsym(\CV) + \NCFsym(\CV) \right]  \text{ symmetric}.
\end{align*}
%
In the context of systems of conservation laws, an interesting algebraic approach is proposed in \citeay{Barros_2006} based on the reinterpretation of the symmetric Condition $(C_{3})$ in Theorem~\ref{theo:cons_syst_SCE}  as a Frobenuis problem. Nevertheless, when dealing with additional non-conservative terms, the above new symmetry condition prevents us from 
applying efficiently such an approach.
\end{remark}
\begin{remark} In Definition~\ref{def:decomposition}, the condition~\eqref{eq:decomposition_condition} implies that the conservative and non-conservative terms depend only on the variables $\CV$, and not on their gradient. Some authors have allowed the matrices $\Fnsymk$ to depend also on the gradients of the variables $\CV$, then a more general condition for the decomposition can be written
\begin{align}\label{eq:theo_b1b2_extended}
  \partial_{\CV} \MES(\CV) \left[ \CFnsym(\CV, \partial_{x}\CV) + \NCFnsym(\CV, \partial_{x}\CV) \right] \partial_{x}\CV  \leq 0.
\end{align}
In Section~\ref{sec:BNZ}, we will see that such a condition has been chosen to close the Baer-Nunziato model \citeay{Saurel_Gavrilyuk_2003}. However, since it changes the mathematical nature of the PDE under investigation, we will not include it in our study.
\end{remark}

From a modelling perspective, System~\eqref{eq:NC_syst} under consideration is not necessary closed. Therefore, the following corollary yields conditions on the model to obtain a supplementary conservative equation once we have postulated the thermodynamics.
\begin{corollary}
Let $\MES: \CV \in \Omega \mapsto \spaces{R}$ be a scalar function, not necessarily convex. Given a first-order nonlinear system of non-conservation laws \eqref{eq:NC_syst} where $\CF: \CV \in \Omega \mapsto \spaces{Rp}$ and $\NCF:\CV \in \Omega \mapsto \spaces{Rpp}$ are unknown functions to be modelled. If we introduce the decomposition as in Definition~\ref{def:decomposition}, then System~\eqref{eq:NC_syst} admits a supplementary conservative equation
  \begin{align}\label{eq:NC_syst_entropy_eq_corro}
    \partial_{t} \MES(\CV) + \partial_{x} \ETF(\CV) = 0,
  \end{align}
  where $\CV \in \Omega \subset \spaces{R}^{p}$ is a smooth solution of System~\eqref{eq:NC_syst} and $\ETF: \CV \in \Omega \mapsto \spaces{R}$ a scalar function, if and only if the following conditions hold
\begin{enumerate}[label=(\subscript{C}{\arabic*}),noitemsep]
  \item $\partial_{\CV\CV} \MES(\CV)\times \left[ \CFsym(\CV) + \NCFsym(\CV) \right]+ \partial_{\CV} \MES(\CV)\otimes \partial_{\CV} \left[\CFsym(\CV) + \NCFsym(\CV) \right] $ is a $p$-square symmetric matrix.
  \item $\partial_{\CV} \MES(\CV) \left[ \CFnsym(\CV) + \NCFnsym(\CV) \right] = \fontvec{0}$.
\end{enumerate}
\end{corollary}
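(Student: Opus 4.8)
The plan is to treat this corollary as the modelling-oriented restatement of Theorem~\ref{theo:NC_syst_SCE}, in which the flux $\CF$, the non-conservative matrix $\NCF$ and their splitting into the four matrices $\CFsym,\CFnsym,\NCFsym,\NCFnsym$ are regarded as design unknowns rather than given data. First I would record the key observation that condition $(C_{2})$ of the corollary is \emph{verbatim} the compatibility relation~\eqref{eq:decomposition_condition} of Definition~\ref{def:decomposition}: imposing it is precisely imposing that the chosen splitting be an admissible decomposition in the sense of that definition. Once this is noted, both implications become bookkeeping on top of Theorem~\ref{theo:NC_syst_SCE}.

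For the ``if'' part I would assume conditions $(C_{1})$ and $(C_{2})$ of the corollary. By $(C_{2})$ the four matrices constitute a decomposition as in Definition~\ref{def:decomposition}, so Theorem~\ref{theo:NC_syst_SCE} applies with the given $\MES$; its statement $(C_{3})$ coincides word for word with condition $(C_{1})$ of the corollary, and the theorem already gives the equivalence of $(C_{3})$ with the existence of a scalar function $\ETF$ satisfying~\eqref{eq:NC_syst_compatibility_eq}, hence with the existence of the supplementary conservative equation~\eqref{eq:NC_syst_entropy_eq_corro}. If one prefers not to cite the theorem as a black box, I would instead note that the Jacobian with respect to $\CV$ of the line vector $\partial_{\CV}\MES(\CV)[\CFsym(\CV)+\NCFsym(\CV)]$ is exactly the matrix written in $(C_{1})$, because differentiating it produces $\partial_{\CV\CV}\MES\times[\CFsym+\NCFsym]$ together with $\partial_{\CV}\MES\otimes\partial_{\CV}[\CFsym+\NCFsym]$; therefore $(C_{1})$ holds iff this line vector is closed and, $\Omega$ being convex, iff it equals $\partial_{\CV}\ETF$ for some scalar $\ETF$, which (by multiplying~\eqref{eq:NC_syst} on the left by $\partial_{\CV}\MES$ and using~\eqref{eq:decomposition_condition} to kill the ``$\mathrm{nsym}$'' contribution) is the desired supplementary conservation law.

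For the ``only if'' part I would start from the hypothesis that~\eqref{eq:NC_syst_entropy_eq_corro} holds for some smooth $\ETF$ and that the splitting has been introduced as in Definition~\ref{def:decomposition}, so that $(C_{2})$ is true by construction and only $(C_{1})$ remains to prove. Left-multiplying~\eqref{eq:NC_syst} by $\partial_{\CV}\MES(\CV)$ gives~\eqref{eq:NC_syst_SCE}; subtracting~\eqref{eq:NC_syst_entropy_eq_corro} leaves $\left(\partial_{\CV}\MES(\CV)\left[\partial_{\CV}\CF(\CV)+\NCF(\CV)\right]-\partial_{\CV}\ETF(\CV)\right)\partial_{x}\CV=0$ along every smooth solution, and using $(C_{2})$ to discard the ``$\mathrm{nsym}$'' part this becomes $\left(\partial_{\CV}\MES[\CFsym+\NCFsym]-\partial_{\CV}\ETF\right)\partial_{x}\CV=0$. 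Since the system is evolutionary in $t$, at any point of $\Omega$ one can prescribe the value of $\CV$ and of $\partial_{x}\CV$ independently through a local Cauchy problem, so the bracket must vanish identically, giving $\partial_{\CV}\MES(\CV)[\CFsym(\CV)+\NCFsym(\CV)]=\partial_{\CV}\ETF(\CV)$ on $\Omega$, i.e.~\eqref{eq:NC_syst_compatibility_eq}; differentiating this identity as above then yields the symmetry statement $(C_{1})$, which closes the equivalence.

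The step I expect to require the most care is logical rather than computational: making explicit the double role of $(C_{2})$ — it is simultaneously the admissibility condition that makes Theorem~\ref{theo:NC_syst_SCE} applicable and one of the two listed conditions — and, in the ``only if'' direction, justifying the passage from ``the $\MES$-balance holds along every smooth solution'' to a pointwise algebraic identity in the frozen pair $(\CV,\partial_{x}\CV)$, which is exactly where the freedom to prescribe Cauchy data enters. Everything else is a direct transcription of Theorem~\ref{theo:NC_syst_SCE}, together with the differentiation argument of the proof of Theorem~\ref{theo:cons_syst_SCE} and the standard converse via the Poincar\'e lemma on the convex set $\Omega$.
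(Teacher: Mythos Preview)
Your proposal is correct and, in fact, considerably more detailed than what the paper does: the paper states this corollary with no proof at all, treating it as an immediate repackaging of Theorem~\ref{theo:NC_syst_SCE} together with Definition~\ref{def:decomposition}. Your key observation --- that condition $(C_{2})$ of the corollary is verbatim the admissibility relation~\eqref{eq:decomposition_condition}, so that once it holds Theorem~\ref{theo:NC_syst_SCE} applies and its $(C_{3})$ is exactly $(C_{1})$ here --- is precisely the intended logic, and your additional care about the ``only if'' direction (freedom to prescribe Cauchy data, Poincar\'e lemma on the convex $\Omega$) simply makes explicit what the paper leaves implicit.
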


\subsection{Design or analysis of physical models using computer algebra.}\label{ssec:NC_theory_applied}
We would like to apply the theory on first-order nonlinear non-conservative systems introduced in Section~\ref{ssec:NC_theory} to physical models such as the Baer-Nunziato model and the plasma model in order to design and analyze them. We recall that our prior interest is to obtain an \emph{entropy} supplementary conservation law. However, the difficulty is manifold:
\begin{itemize}[label=$-$]
  \item The combination of the non-conservative terms and conservative terms proposed in Definition~\ref{def:decomposition} to build a supplementary conservative equation is not unique and thus many degrees of freedom exist in defining the matrices $\Fsymk$ and $\Fnsymk$.
  \item When the model is derived trough rational thermodynamics, terms in the system of equations might need closure and the thermodynamics has to be postulated. Therefore, the matrices $\Fsymk$ and $\Fnsymk$ can contain unknowns related to the system and the definition of $\MES$.
  \item The calculations needed to derive a supplementary conservative equation are heavy and choice-based. Any change of $\Fsymk$ and $\Fnsymk$ that respects Definition~\ref{def:decomposition}, or any new postulated thermodynamics would require to derive again all the equations, and eventually a very limited range of possibilities would be examined.
\end{itemize}
These difficulties to apply the theory and examine all the possibilities makes computer algebra very appealing since it allows symbolic operations to be implemented and thus can derive equations systematically and quasi-instantaneously for any combinations of conservative and non-conservative terms as well as model closure and $\MES$ definition. 

Furthermore, the generic level handled by computer algebra is not unlimited and therefore Definition~\ref{def:decomposition} requires further assumptions to circumscribe the number of degrees of freedom that can be accounted for.

Even if the theory proposed hereinbefore is valid to obtain any kind of supplementary conservation laws, we are mainly interested 
in obtaining an entropy supplementary conservation law. We thus need  to define the notions of \textit{entropy} and \textit{entropic variables} in the following two definitions.

\begin{definition} $\MES: \CV \in \Omega \mapsto \spaces{R}$ is said to be an \textit{entropy} of the system~\eqref{eq:NC_syst} if $\MES(\CV)$ is a convex scalar function of the variables $\CV$ which fulfills Theorem~\ref{theo:cons_syst_SCE}. The supplementary conservative equation~\eqref{eq:cons_syst_entropy_eq} is then named the \textit{entropy equation} and $\ETF: \CV \in \Omega \mapsto \spaces{R}$ is the associated \textit{entropy flux}.
\end{definition}

\begin{definition}\label{def:entropic_variable}
Let $\MES: \CV \in \Omega \mapsto \spaces{R}$ be a scalar function, not necessarily convex. Given a first-order nonlinear conservative system~\eqref{eq:cons_syst_non_linear}, let us define the \textit{entropic variables} $\MEV: \CV \in \Omega \mapsto \spaces{Rp}$ such that
\begin{align}
  \MEV(\CV) = \left( \partial_{\CV} \MES(\CV)\right)^{t}.
\end{align}
\end{definition}

The entropic variables have been studied in \citeay{Giovangigli_1998} in order to obtain symmetric and normal forms of the system of equation and used in the framework of gaseous mixtures, where 
the mathematical entropy $\MES$ is usually defined as the opposite of a physical entropy density per unit volume of the system \citeay{Giovangigli_1998}.

\begin{definition}\label{def:decomposition_applied} Given a scalar function $\MES: \CV \in \Omega \mapsto \spaces{R}$, a first-order nonlinear non-conservative system~\eqref{eq:NC_syst}, and the four $p$-square matrices $\CFsym(\CV)$, $\CFnsym(\CV)$, $\NCFsym(\CV)$ and $\NCFnsym(\CV)$ in $\spaces{Rpp}$ defined in Definition~\ref{def:decomposition}, we introduce the unknown line vector $\transfer{v}: \CV \in \Omega \mapsto \spaces{Rp}$ such that
\begin{align}
    \partial_{\CV} \MES(\CV) \left[ \CFsym(\CV) + \NCFsym(\CV) \right] &= \partial_{\CV} \MES(\CV) \, \partial_{\CV} \CF(\CV) + \transfer{v}(\CV),\\
  \partial_{\CV} \MES(\CV) \left[ \CFnsym(\CV) + \NCFnsym(\CV) \right] &= \partial_{\CV} \MES(\CV) \, \NCF(\CV) - \transfer{v}(\CV).  
\end{align}
The condition of Equation~\eqref{eq:decomposition_condition} rewrites into
\begin{align}\label{eq:decomposition_condition_applied}
  \partial_{\CV} \MES(\CV) \, \NCF(\CV) - \transfer{v}(\CV) = \fontvec{0}.
\end{align}
\end{definition}
\begin{remark}
Since Definition~\ref{def:decomposition_applied} is a projection of the matrix equations of Definition~\ref{def:decomposition} on the vector $\partial_{\CV} \MES(\CV)$, it may be interesting to introduce an unknown matrix $\transmatrix(\CV) \in \spaces{R}^{p \times p}$ associated to the unknown line vector $\transfer{v}(\CV)$ such that
\begin{align}\label{def:gamma_matrix}
  \transfer{v}(\CV) = \partial_{\CV} \MES(\CV) \transmatrix (\CV).
\end{align}
Thus, Definition~\ref{def:decomposition_applied} can be formulated as follows
\begin{align}
  \CFsym(\CV) + \NCFsym(\CV) &= \partial_{\CV} \CF(\CV) + \transmatrix(\CV), \\
  \CFnsym(\CV) + \NCFnsym(\CV) &= \NCF(\CV) - \transmatrix(\CV),
\end{align}
with the condition
\begin{align}
  \partial_{\CV} \MES(\CV) \left[ \NCF(\CV) - \transmatrix(\CV) \right] = \fontvec{0}.
\end{align}
\end{remark}

The unknown functional line vector $\transfer{v}(\CV) \in \spaces{R}^{p}$ represents the transfer of non-conservative terms to the conservative terms. In the degenerate case where $\transfer{v}=\fontvec{0}$, $\Fsymk$ receives all the conservative terms and $\Fnsymk$ all the non-conservative terms. Condition~\eqref{eq:decomposition_condition_applied} forces all the non-conservative terms to vanish and System~\eqref{eq:NC_syst} is fully conservative, hence the theory of conservative system can be applied.

Definition~\ref{def:decomposition_applied} being more restrictive than Definition~\ref{def:decomposition}, computer algebra is now applicable to analyze the properties of a first-order nonlinear non-conservative system leading to a reformulation of Theorem~\ref{theo:NC_syst_SCE}.
\begin{theorem}\label{theo:NC_SCE_applied}
Let $\MES: \CV \in \Omega \mapsto \spaces{R}$ be a scalar function, not necessarily convex. Consider a first-order nonlinear system of non-conservation laws \eqref{eq:NC_syst}. If we introduce the decomposition as in Definition~\ref{def:decomposition_applied}, then the following statements are equivalent:
\begin{enumerate}[label=\text{\normalfont (}\subscript{C}{\arabic*}\text{\normalfont )},noitemsep]
  \item System~\eqref{eq:NC_syst} admits a supplementary conservative equation
  \begin{align}\label{eq:entropy_eq}
    \partial_{t} \MES(\CV) + \partial_{x} \ETF(\CV) = 0,
  \end{align}
  where $\CV \in \spaces{R}^{p}$ is a smooth solution of System~\eqref{eq:NC_syst} and $\ETF: \CV \in \Omega \mapsto \spaces{R}$ is a scalar function.
  \item There exists a scalar function $\ETF: \CV \in \Omega \mapsto \spaces{R}$ such that 
  \begin{align}\label{eq:compatibility_eq}
    \partial_{\CV} \MES(\CV) \, \partial_{\CV} \CF(\CV) + \transfer{v}(\CV) &= \partial_{\CV} \ETF(\CV).
  \end{align}
  \item $\partial_{\CV \CV} \MES(\CV) \times \partial_{\CV} \CF(\CV) +  \partial_{\CV} \transfer{v}(\CV)$ is a $p$-square symmetric matrix.
\end{enumerate}
\end{theorem}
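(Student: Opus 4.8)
The plan is to deduce Theorem~\ref{theo:NC_SCE_applied} from Theorem~\ref{theo:NC_syst_SCE}, observing that Definition~\ref{def:decomposition_applied} is simply the special case of Definition~\ref{def:decomposition} obtained by projecting the matrix identities onto the line vector $\partial_{\CV}\MES(\CV)$ and encoding the projected freedom in the single line vector $\transfer{v}$. Concretely, I would substitute the two identities of Definition~\ref{def:decomposition_applied},
\[
  \partial_{\CV}\MES\,[\CFsym+\NCFsym] = \partial_{\CV}\MES\,\partial_{\CV}\CF + \transfer{v},\qquad
  \partial_{\CV}\MES\,[\CFnsym+\NCFnsym] = \partial_{\CV}\MES\,\NCF - \transfer{v},
\]
into statements $(C_2)$ and $(C_3)$ of Theorem~\ref{theo:NC_syst_SCE}: the first turns at once into~\eqref{eq:compatibility_eq}, while $(C_1)$ is literally the same in both theorems. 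Since a self-contained argument in the style of the proofs of Theorems~\ref{theo:cons_syst_SCE} and~\ref{theo:NC_syst_SCE} is just as short, that is what I would actually write out, along the following lines.

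For $(C_1)\Leftrightarrow(C_2)$: along any smooth solution of~\eqref{eq:NC_syst}, the chain rule together with the PDE gives $\partial_t\MES = \partial_{\CV}\MES\,\partial_t\CV = -\partial_{\CV}\MES\,[\partial_{\CV}\CF+\NCF]\,\partial_x\CV$. Splitting with Definition~\ref{def:decomposition_applied} and using the decomposition condition~\eqref{eq:decomposition_condition_applied} in the form $\partial_{\CV}\MES\,\NCF = \transfer{v}$ yields $\partial_t\MES = -[\partial_{\CV}\MES\,\partial_{\CV}\CF + \transfer{v}]\,\partial_x\CV$, whereas $\partial_x\ETF = \partial_{\CV}\ETF\,\partial_x\CV$ for any scalar $\ETF$. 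Hence~\eqref{eq:entropy_eq} holds along every smooth solution if and only if $\bigl[\partial_{\CV}\MES\,\partial_{\CV}\CF + \transfer{v} - \partial_{\CV}\ETF\bigr]\cdot\partial_x\CV = 0$ along every smooth solution; since through each point $\CV_0\in\Omega$ there is a smooth solution with $\partial_x\CV$ prescribed arbitrarily (freedom in the first-order Cauchy data, $\Omega$ being open), the bracket must vanish identically on $\Omega$, which is exactly~\eqref{eq:compatibility_eq}. The converse implication follows by reading the same computation backwards.

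For $(C_2)\Leftrightarrow(C_3)$: condition~\eqref{eq:compatibility_eq} says that the line vector field $\bs{g}(\CV) := \partial_{\CV}\MES(\CV)\,\partial_{\CV}\CF(\CV) + \transfer{v}(\CV)$ is exact on $\Omega$, i.e.\ equals $\partial_{\CV}\ETF$ for some scalar $\ETF$. As $\Omega$ is an open convex, hence simply connected, subset of $\spaces{R}^{p}$, the Poincar\'e lemma makes this equivalent to the symmetry of the Jacobian $\partial_{\CV}\bs{g}$. The product rule, applied exactly as in the proof of Theorem~\ref{theo:cons_syst_SCE}, gives $\partial_{\CV}\bs{g} = \partial_{\CV\CV}\MES\times\partial_{\CV}\CF + \partial_{\CV}\MES\otimes\partial_{\CV\CV}\CF + \partial_{\CV}\transfer{v}$, and the middle term $\partial_{\CV}\MES\otimes\partial_{\CV\CV}\CF = \sum_i \partial_{\CV_i}\MES\,\partial_{\CV\CV}\CF_i$ is a linear combination of Hessian matrices, hence symmetric. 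Therefore $\partial_{\CV}\bs{g}$ is symmetric if and only if $\partial_{\CV\CV}\MES\times\partial_{\CV}\CF + \partial_{\CV}\transfer{v}$ is symmetric, i.e.\ $(C_3)$.

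Most of this is bookkeeping; the single genuinely delicate point is the passage in $(C_1)\Rightarrow(C_2)$ from ``the identity holds for every attainable pair $(\CV,\partial_x\CV)$'' to ``the identity holds pointwise in $\CV$,'' which I would justify by invoking local existence of smooth solutions with arbitrarily prescribed value and first spatial derivative at a point. The only structural hypothesis I would lean on is the convexity, hence simple connectedness, of $\Omega$, needed for the Poincar\'e lemma. Note finally that here $\transfer{v}$ is a datum fixed within Definition~\ref{def:decomposition_applied}, so---unlike in Theorem~\ref{theo:NC_syst_SCE}---no residual freedom in the decomposition survives beyond the choice of $\transfer{v}$ itself, which is precisely what makes the reformulation amenable to systematic computer-algebra treatment.
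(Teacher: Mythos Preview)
Your proposal is correct and follows essentially the same approach as the paper: the paper's proof is the single sentence ``Injecting Definition~\ref{def:decomposition_applied} into Theorem~\ref{theo:NC_syst_SCE} leads to these results,'' which is precisely your opening plan of substituting the projected identities of Definition~\ref{def:decomposition_applied} into the statements of Theorem~\ref{theo:NC_syst_SCE}. Your additional self-contained argument (the explicit $(C_1)\Leftrightarrow(C_2)$ via arbitrariness of $\partial_x\CV$, and $(C_2)\Leftrightarrow(C_3)$ via the Poincar\'e lemma on the convex $\Omega$) supplies detail the paper omits but does not deviate in method.
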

\begin{proof}
Injecting Definition~\ref{def:decomposition_applied} into Theorem~\ref{theo:NC_syst_SCE} leads to these results.
\end{proof}

When $\MES$ is the entropy of the system, Theorem~\ref{theo:NC_SCE_applied} provides equations that relate the thermodynamics of the model through $\MES$, the model itself with possible terms to be closed in $\CF(\CV)$ and $\NCF(\CV)$, and the unknown line vector $\transfer{v}(\CV)$. Combined with the Definition~\ref{def:decomposition_applied}, Theorem~\ref{theo:NC_SCE_applied} brings out conditions on the model to obtain a supplementary conservative equation given a postulated thermodynamics and it leads to the following corollary.
\begin{corollary}\label{coro:NC_syst_applied_metho}
Consider a first-order nonlinear system of non-conservation laws \eqref{eq:NC_syst} where $\CV \in \Omega \subset \spaces{R}^{p}$ is a smooth solution with $\Omega$ an open convex of $\spaces{R}^{p}$ but $\CF: \CV \in \Omega \mapsto \spaces{Rp}$ and $\NCF:\CV \in \Omega \mapsto \spaces{Rpp}$ are unknown functions to be modelled. 
Let $\MES: \CV \in \Omega \mapsto \spaces{R}$ be a scalar function, not necessarily convex. If we introduce the decomposition as in Definition~\ref{def:decomposition_applied}, then System~\eqref{eq:NC_syst} admits a supplementary conservative equation
  \begin{align}\label{eq:NC_syst_entropy_eq_corro_applied}
    \partial_{t} \MES(\CV) + \partial_{x} \ETF(\CV) = 0,
  \end{align}
  where $\ETF: \CV \in \Omega \mapsto \spaces{R}$ is a scalar function
if and only if the following conditions hold
\begin{enumerate}[label=\text{\normalfont (}\subscript{C}{\arabic*}\text{\normalfont )},noitemsep]
  \item $\partial_{\CV \CV} \MES(\CV) \times \partial_{\CV} \CF(\CV)  +  \partial_{\CV} \transfer{v}(\CV)$ is symmetric.
  \item $\partial_{\CV} \MES(\CV) \NCF(\CV) - \transfer{v}(\CV)  = \fontvec{0}$.
\end{enumerate}
\end{corollary}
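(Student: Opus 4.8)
The plan is to read Corollary~\ref{coro:NC_syst_applied_metho} as a mere re-packaging of Theorem~\ref{theo:NC_SCE_applied}: its condition $(C_1)$ is verbatim statement $(C_3)$ of that theorem, while its condition $(C_2)$ is exactly the compatibility requirement~\eqref{eq:decomposition_condition_applied} that makes the split of Definition~\ref{def:decomposition_applied} an admissible decomposition in the sense of Definition~\ref{def:decomposition}. So the two conditions together assert ``the decomposition is admissible \emph{and} the associated symmetry condition holds'', and what remains is to check that this conjunction is equivalent to the existence of the supplementary conservative equation~\eqref{eq:NC_syst_entropy_eq_corro_applied}. No convexity of $\MES$ is needed, since only the general (non-necessarily-convex) Theorem~\ref{theo:NC_SCE_applied} is invoked.

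For the ``if'' direction I would argue as follows. Assuming $(C_2)$, identity~\eqref{eq:decomposition_condition_applied} holds, so the matrices $\CFsym,\CFnsym,\NCFsym,\NCFnsym$ associated with the chosen transfer vector $\transfer{v}$ genuinely decompose $\partial_{\CV}\CF$ and $\NCF$ in the sense of Definition~\ref{def:decomposition}; we are then in the hypotheses of Theorem~\ref{theo:NC_SCE_applied}. Since $(C_1)$ here is statement $(C_3)$ there, the equivalence $(C_1)\Leftrightarrow(C_3)$ of that theorem yields its statement $(C_1)$, i.e.\ the existence of a scalar flux $\ETF$ with $\partial_{t}\MES(\CV)+\partial_{x}\ETF(\CV)=0$ along smooth solutions. (The passage from the symmetry condition to the actual existence of $\ETF$ is the convexity/Poincar\'e argument already internal to Theorems~\ref{theo:cons_syst_SCE} and~\ref{theo:NC_SCE_applied}.)

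For the ``only if'' direction I would start, exactly as in the proof of Theorem~\ref{theo:cons_syst_SCE}, from Equation~\eqref{eq:NC_syst_SCE} obtained by left-multiplying system~\eqref{eq:NC_syst} by $\partial_{\CV}\MES(\CV)$. Writing $\partial_{x}\ETF(\CV)=\partial_{\CV}\ETF(\CV)\,\partial_{x}\CV$ and using that at any point of $\Omega$ the gradient $\partial_{x}\CV$ may be prescribed arbitrarily (the same reasoning that gives $(C_1)\Rightarrow(C_2)$ in Theorem~\ref{theo:cons_syst_SCE}), the supplementary conservative equation forces the pointwise identity $\partial_{\CV}\MES(\CV)\,[\partial_{\CV}\CF(\CV)+\NCF(\CV)]=\partial_{\CV}\ETF(\CV)$. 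I would then produce the transfer vector realizing Definition~\ref{def:decomposition_applied} by setting $\transfer{v}(\CV):=\partial_{\CV}\ETF(\CV)-\partial_{\CV}\MES(\CV)\,\partial_{\CV}\CF(\CV)$; substituting the identity just obtained gives $\transfer{v}(\CV)=\partial_{\CV}\MES(\CV)\,\NCF(\CV)$, which is precisely $(C_2)$. Differentiating $\partial_{\CV}\ETF(\CV)=\partial_{\CV}\MES(\CV)\,\partial_{\CV}\CF(\CV)+\transfer{v}(\CV)$ with respect to $\CV$, using $\partial_{\CV}(\partial_{\CV}\MES\,\partial_{\CV}\CF)=\partial_{\CV\CV}\MES\times\partial_{\CV}\CF+\partial_{\CV}\MES\otimes\partial_{\CV\CV}\CF$, and discarding the symmetric term $\partial_{\CV}\MES\otimes\partial_{\CV\CV}\CF$ (a linear combination of Hessians) together with the symmetric left-hand side $\partial_{\CV\CV}\ETF$ — the very manipulation already carried out for Theorem~\ref{theo:cons_syst_SCE} — leaves $\partial_{\CV\CV}\MES(\CV)\times\partial_{\CV}\CF(\CV)+\partial_{\CV}\transfer{v}(\CV)$ symmetric, i.e.\ $(C_1)$.

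I expect the only genuine subtlety to be conceptual rather than computational: in Definition~\ref{def:decomposition_applied} the transfer vector $\transfer{v}$ is a free degree of freedom, so the ``only if'' direction does not produce $(C_1)$--$(C_2)$ for an arbitrary prior choice of $\transfer{v}$ but for the specific one dictated by the entropy flux $\ETF$; with that choice both conditions fall out of the algebra already done in the conservative case, while the ``if'' direction is a direct reduction to Theorem~\ref{theo:NC_SCE_applied}. Everything beyond this is routine bookkeeping with the tensor conventions fixed in the Notations paragraph.
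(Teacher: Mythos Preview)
Your proposal is correct and follows the same approach the paper takes: the corollary is stated without proof in the paper, as it is meant to be an immediate repackaging of Theorem~\ref{theo:NC_SCE_applied} together with the compatibility condition~\eqref{eq:decomposition_condition_applied} of Definition~\ref{def:decomposition_applied}. Your ``if'' direction is exactly this reduction, and your more explicit ``only if'' direction---constructing $\transfer{v}$ from the entropy flux and noting that it is this particular transfer vector, not an arbitrary one, for which $(C_1)$--$(C_2)$ hold---makes precise a point the paper leaves implicit.
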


\begin{remark}
The previous framework can be extended to the multi-dimensional case in a straightforward manner. If the original system is isotropic, such as for the applications we have in mind, then the previous conditions will be the same in the various directions. In the framework of more general non-isotropic systems, which satisfy Galilean and rotational invariances for example, we will obtain different conditions and we have to check that the decomposition we perform in the various directions satisfies some compatibility relations so that the obtained conservation law satisfies the original invariance properties of the system.
\end{remark}

\subsection{Methodology.}\label{ssec:NC_metho}
Corollary~\ref{coro:NC_syst_applied_metho} draws the methodology we have implemented in the Maple{\texttrademark} computer algebra software\footnote{Maple is a trademark of Waterloo Maple Inc.} in order to obtain an \emph{entropy} supplementary conservation law. Our methodology is the following:
\begin{enumerate}[label=$(Step \ \arabic*)$]
  \item We define the thermodynamics by postulating - if need be - an entropy function $\MES: \CV \in \Omega \mapsto \spaces{R}$.
  \item We then use Condition $(C_{1})$ and $(C_{2})$ of Corollary~\ref{coro:NC_syst_applied_metho} to ensure the existence of an entropy flux $\ETF: \CV \in \Omega \mapsto \spaces{R}$ and solve
  \begin{align}\label{eq:NC_SCE_applied_cond}
\left\{ \,
\begin{IEEEeqnarraybox}[\IEEEeqnarraystrutmode
\IEEEeqnarraystrutsizeadd{1pt}{1pt}][c]{l}
\partial_{\CV \CV} \MES(\CV) \times \partial_{\CV} \CF(\CV)  +  \partial_{\CV} \transfer{v}(\CV) \text{ symmetric}, \\
\partial_{\CV} \MES(\CV) \, \NCF(\CV) - \transfer{v}(\CV)  = \fontvec{0}.
\end{IEEEeqnarraybox}
\right.
  \end{align}
  In System~\eqref{eq:NC_SCE_applied_cond}, $\transfer{v}(\CV)$ is systematically an unknown, $\CF(\CV)$, $\NCF(\CV)$ as well as $\MES(\CV)$ can include unknown terms for which the variable dependency is specified. Maple{\texttrademark} generates then an exhaustive solution for $\transfer{v}(\CV)$ and constraints on all the other unknown terms.
  \item From that, the software derives the admissible entropy flux $\ETF: \CV \in \Omega \mapsto \spaces{R}$ which gives then the supplementary conservative equation.
\end{enumerate}
%
%
\section{Application to the Baer-Nunziato model}\label{sec:BNZ}

\subsection{Context and presentation of the model.}
The Baer-Nunziato model has been derived through rational thermodynamics in \citeay{Baer_Nunziato_1986} and describes a two-phase flow out of equilibrium. Extended by the work of \citeay{Saurel_1999} thanks to the introduction of interfacial quantities, the homogeneous form of the Baer-Nunziato model is
\renewcommand\arraystretch{1}
\setlength\arraycolsep{8pt}
\begin{align}\label{sys:BNZ_eq}
\begin{IEEEeqnarraybox}[\IEEEeqnarraystrutmode
\IEEEeqnarraystrutsizeadd{1pt}{1pt}][c]{c}
  \partial_{t} \CV + \left[ \partial_{\CV} \CF(\CV) + \NCF(\CV) \right] \partial_{x} \CV = \fontvec{0},\\ \\
  \partial_{\CV} \CF(\CV) = \begin{pmatrix}
        0 & \fontvec{0} & \fontvec{0} \\
        \fontvec{0}   & \partial_{\CV_{2}} \CF_{2}(\CV_{2}) & \fontvec{0}  \\
      \fontvec{0} & \fontvec{0} & \partial_{\CV_{1}} \CF_{1}(\CV_{1})
    \end{pmatrix}, \ %
\NCF(\CV) = \begin{pmatrix}
        \speed{sint} & \fontvec{0} & \fontvec{0}\\
        \NCFtwo & \fontvec{0} & \fontvec{0}\\
      \NCFone & \fontvec{0} & \fontvec{0}
    \end{pmatrix},
\end{IEEEeqnarraybox}
\end{align}
where the column vector $\CV \in \spaces{R}^{7}$ is defined by $\CV^{\trans} = \left(\alpha_{2},\, \CV_{2}^{\trans},\, \CV_{1}^{\trans} \right)$, $\CV_{k}^{\trans} = ( \alpha_{k} \rho_{k},\, \allowbreak \allowbreak \alpha_{k} \rho_{k} \speed{sk},\, \allowbreak \alpha_{k} \rho_{k} E_{k} )$. The conservative flux $\CF: \CV \in \Omega \mapsto \spaces{R7}$ reads $\CF(\CV)^{\trans} = (0,\, \CF_{2}(\CV_{2})^{\trans},\, \CF_{1}(\CV_{1})^{\trans})$ with $\CF_{k}(\CV_{k})^{\trans} = (\alpha_{k} \rho_{k} \speed{sk},\, \allowbreak \alpha_{k}(\rho_{k} \speed{sk}^{2}+p_{k}),\, \allowbreak \alpha_{k} ( \rho_{k} E_{k}+p_{k})\speed{sk} )$. $\NCF: \CV \in \Omega \mapsto \spaces{R77}$ is the matrix containing the non-conservative terms with $\NCFtwo(\CV)^{\trans} = - \NCFone(\CV)^{\trans} = (0,\, \allowbreak  -\pI,\, \allowbreak -\pI \speed{sint})$. Then, $\alpha_{k}$ is the volume fraction of phase $k \in \left[ 1,2 \right]$, $\rho_{k}$ the partial density, $\speed{sk}$ the phase velocity, $p_{k}$ the phase pressure, $E_{k}=\epsilon_{k} + \allowbreak \speed{sk}^{2}/2$ the total energy per unit of mass, $\epsilon_{k}$ the internal energy, $\speed{sint}$ the interfacial velocity and $\pI$ the interfacial pressure.

Two levels of ingredients are still missing for this model. First, the macroscopic set of equations includes the interface dynamics through the interfacial terms $\speed{sint}$ and $\pI$ and thus needs closure on these terms. Second the thermodynamics has to be postulated.

The mathematical properties of the model have been studied by \citeay{Embid_Baer_1992, Coquel_2002, Gallouet_2004} among others and many closure have been proposed for the interfacial terms based on wave-type considerations and the entropy inequality.

Regarding the thermodynamics, for non-miscible phases, the entropy $\MES(\CV)$ is commonly defined by Equation~\eqref{eq:mixture_entropy_immiscible} as in \citeay{Coquel_2002, Lochon_PhdThesis_2016},
\begin{align}\label{eq:mixture_entropy_immiscible}
  \MES(\CV) = - \sum_{k=1,2} \alpha_{k} \rho_{k} s_{k},
\end{align}
with $s_{k}=s_{k}(\rho_{k},p_{k})$ the phase entropy which takes for the Ideal Gas equation of state the form
\begin{align}
  s_{k}= c_{v,k} \text{ln}\left(\frac{p_{k}}{\rho_{k}^{\gamma_{k}}}\right),
\end{align}
with $c_{v,k}$ the heat capacity, $p_{k}$ the pressure, $\rho_{k}$ the density and $\gamma_{k}$ the isentropic coefficient of phase $k$.

If we were to account for partial miscibility between the two phases, we would have to add a mixing term to the definition of the non-miscible entropy. The mixing term could take the form proposed in \citeay{Gallouet_2004}, so that the entropy rewrites
\begin{align}\label{eq:mixture_entropy_extended}
  \MES =  - \sum_{k=1,2} \alpha_{k} \rho_{k} \left[ \PES_{k}(\rho_{k},p_{k}) - \psi_{k}(\alpha_{k}) \right],
\end{align}
with $\psi_{k}$, $k=\left[ 1,2 \right]$, two strictly convex nonlinear arbitrary functions depending on the volume fraction. Nevertheless, so far in the literature, no explicit expressions of these functions have been proposed. In \citeay{Gallouet_2004}, in order to obtain a supplementary conservative equation using the entropy defined in Equation~\eqref{eq:mixture_entropy_extended}, the authors show that the following condition has to be fulfilled
\begin{align}\label{eq:mixing_term_condition_BNZ}
  \psi_{k}(\alpha_{k}) = \psi_{k^{\prime}}(\alpha_{k^{\prime}}).
\end{align}

In this section, we apply to the Baer-Nunziato model the framework introduced in Section~\ref{sec:theory} by means of computer algebra. We will firstly assume the phases are non-miscible and derive an entropy supplementary conservative equation along with conditions on the interfacial terms. All the closures proposed in the literature will be recovered. Secondly, we will also apply the methodology in the case of a thermodynamics with partial miscibility and derive an entropy supplementary conservative equation together with conditions on both the interfacial terms and the mixing terms of the entropy. Not only all the closures proposed in the literature are recovered but also new ones and we also propose explicit formulations of the mixing terms and show that depending on their expression, the condition expressed in \citeay{Gallouet_2004} is not necessary.

\subsection{Methodology and decomposition.}
We start without any condition on $(\speed{sint}, \pI)$. We need initially to fix a decomposition of $\partial_{\CV} \CF(\CV)$ and $\NCF(\CV)$ including a certain degree of freedom as explained in Section~\ref{ssec:NC_theory_applied}.

Given an entropy $\MES: \CV \in \Omega \mapsto \spaces{R}$ of System~\eqref{sys:BNZ_eq}, by expressing the entropic variables as $\MEV(\CV)^{\trans} = \left( \MEValpha, \MEV_{2}^{\trans}, \MEV_{1}^{\trans}\right)$, we use the decomposition proposed in Definition~\eqref{def:decomposition_applied}. Since we do not want to generate other non-conservative terms, we choose to define the line vector $\transfer{v}: \CV \in \Omega \mapsto \spaces{Rp}$ by $\transfer{v}(\CV) = \left( \transfer{salpha}(\CV), \fontvec{0}, \fontvec{0} \right)$ where $\transfer{salpha}: \CV \in \Omega \mapsto \spaces{R}$ is the unknown scalar function a priori of all the variables $\CV$. We obtain the following decompositions
\begin{IEEEeqnarray}{rClrClrClrCl}\label{eq:final_entropic_condition_decomp_BNZ}
\IEEEyesnumber\IEEEyessubnumber*
\left( \partial_{\CV} \MES \left[ \CFsym + \NCFsym \right] \right)^{\trans} &=&\begin{pmatrix}
    \transfer{salpha}(\CV)\\
    \MEV_{2} \cdot \partial_{\CV_{2}}\CF_{2}(\CV_{2})\\
    \MEV_{1} \cdot \partial_{\CV_{1}}\CF_{1}(\CV_{1})
\end{pmatrix}, \\  %
\IEEEnonumber\IEEEyessubnumber*
\left(\partial_{\CV} \MES \left[ \CFnsym + \NCFnsym \right] \right)^{\trans} &=&\begin{pmatrix}
    -\transfer{salpha}(\CV) + \MEValpha \speed{sint} + \sum\limits_{\text{\tiny $k{=}1,2$}}\MEV_{k} \cdot \NCFk\\
    \fontvec{0}\\
    \fontvec{0}
\end{pmatrix}.
\end{IEEEeqnarray}
\normalsize
$\transfer{salpha}$ allows fractions of the non-conservative terms to feed the matrix $\Fsymk$.

Given this decomposition, we use the methodology proposed in Section~\ref{ssec:NC_metho}. ($Step$ 2) will be split here into two sub-steps.
 \begin{enumerate}[label=$(Step \ 2.\alph*)$]
 \setcounter{enumi}{0}
    \item Condition $(C_{1})$ on the symmetry of the matrix $\partial_{\CV \CV} \MES(\CV) \times \partial_{\CV} \CF(\CV)  +  \partial_{\CV} \transfer{v}(\CV)$ ensures the existence of an entropy flux $\ETF(\CV)$. It will determine $\transfer{v}(\CV)$.
    \item Knowing $\transfer{v}(\CV)$, Condition $(C_{2})$, $\partial_{\CV} \MES(\CV) \, \NCF(\CV) - \transfer{v}(\CV)  = \fontvec{0}$, will return an equation linking $(\speed{sint},\pI)$ and also $\psi_{k}$ when miscibility is accounted for.
\end{enumerate}
\subsection{Non-miscible phases entropy.}
We start applying our method $(Step \ 1)$ by postulating $\MES$ as in Equation~\eqref{eq:mixture_entropy_immiscible}. The thermodynamics is entirely known and we use the Ideal Gas EOS. The entropic variables $\MEV$ are then
\begin{align}
  \MEV = \begin{pmatrix}
      \MEValpha \\
      \MEV_{2} \\
      \MEV_{1}
    \end{pmatrix} %
  \text{ with } %
  \MEValpha = \frac{p_{1}}{T_{1}} - \frac{p_{2}}{T_{2}} %
  \text{ and } %
  \MEV_{k} = \frac{1}{T_{k}}\begin{pmatrix} 
      g_{k} - \speed{sk}^{2}/2 \\
      \speed{sk} \\
      -1 
    \end{pmatrix},
\end{align}
with $g_{k}$ the Gibbs free energy, $g_{k} =\epsilon_{k} + p_{k}/\rho_{k} - T_{k}s_{k}$. We now apply the conditions to determine $\transfer{salpha}(\CV)$ and derive the equation that links the interfacial quantities $\speed{sint}$ and $\pI$.
\begin{theorem}\label{theo:BNZ_ab_classic}
Consider System~\eqref{sys:BNZ_eq}. If the mixture entropy is defined as $\MES = - \sum_{k=1,2} \alpha_{k} \rho_{k} \PES_{k}$ then with the decomposition proposed in Equations~\eqref{eq:final_entropic_condition_decomp_BNZ}
\begin{align}\label{eq:classic_cond_a}
  \partial_{\CV \CV} \MES(\CV) \times \partial_{\CV} \CF(\CV)  +  \partial_{\CV} \transfer{v}(\CV) \text{ symmetric } \Leftrightarrow \transfer{salpha} (\CV) = \fontscal{F}(\alpha_{2}) + \frac{p_{1}}{T_{1}}u_{1} - \frac{p_{2}}{T_{2}}u_{2},
\end{align}
with $\fontscal{F}$ a strictly convex arbitrary function depending on the volume fraction $\alpha_{2}$. As a consequence the condition on $\partial_{\CV} \MES(\CV) \left[ \CFnsym(\CV) + \NCFnsym(\CV) \right]$ gives
\begin{align}\label{eq:classic_cond_b}
\begin{IEEEeqnarraybox}[\IEEEeqnarraystrutmode
\IEEEeqnarraystrutsizeadd{1pt}{1pt}][c]{rl}
   \partial_{\CV} \MES(\CV) \left[ \CFnsym(\CV) + \NCFnsym(\CV) \right] &= \fontvec{0} \\
  \Leftrightarrow \ - \fontscal{F}(\alpha_{2}) + \sum_{k=1,2} \frac{(-1)^{k}}{T_{k}} (\pI - p_{k} )( \speed{sk} - \speed{sint}) &= 0.
\end{IEEEeqnarraybox}
\end{align}
\end{theorem}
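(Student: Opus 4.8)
The idea is to apply Theorem~\ref{theo:NC_SCE_applied} with the decomposition fixed in~\eqref{eq:final_entropic_condition_decomp_BNZ}, and then to reduce the computation phase by phase to classical facts about the Euler equations. By the equivalence stated in Theorem~\ref{theo:NC_SCE_applied}, Condition~$(C_{1})$ — symmetry of $\partial_{\CV\CV}\MES\times\partial_{\CV}\CF+\partial_{\CV}\transfer{v}$ — is equivalent to $\partial_{\CV}\MES\,\partial_{\CV}\CF+\transfer{v}$ being a gradient, i.e.\ to closedness on the convex (hence simply connected) set $\Omega$ of the $1$-form $\omega=\bigl(\partial_{\CV}\MES\,\partial_{\CV}\CF+\transfer{v}\bigr)\!\cdot d\CV$, whose components are $\transfer{salpha}$ in the $\alpha_{2}$ slot and $\MEV_{k}\cdot\partial_{\CV_{k}}\CF_{k}$ in the $\CV_{k}$ slot. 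I would then exploit the block structure of System~\eqref{sys:BNZ_eq}: writing $w_{k}=\CV_{k}/\alpha_{k}=(\rho_{k},\rho_{k}\speed{sk},\rho_{k}E_{k})$, one has $\CF_{k}(\CV_{k})=\alpha_{k}f_{k}(w_{k})$ with $f_{k}$ the usual Euler flux, and the phase-$k$ part of $\MES$ is $\alpha_{k}\sigma_{k}(w_{k})$ with $\sigma_{k}=-\rho_{k}\PES_{k}$ the usual mathematical entropy; crucially $f_{k}$ is positively homogeneous of degree one in $w_{k}$.

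From homogeneity, $\partial_{\CV_{k}}\CF_{k}=\partial_{w_{k}}f_{k}$ and $\MEV_{k}=(\partial_{w_{k}}\sigma_{k})^{\trans}$ (both at $w_{k}=\CV_{k}/\alpha_{k}$), so the classical Euler entropy/entropy-flux identity gives $\MEV_{k}\cdot\partial_{\CV_{k}}\CF_{k}=\partial_{w_{k}}\psi_{k}$ with $\psi_{k}=\sigma_{k}\speed{sk}=-\rho_{k}\PES_{k}\speed{sk}$. Using $d\CV_{k}=\alpha_{k}\,dw_{k}+w_{k}\,d\alpha_{k}$ with $d\alpha_{1}=-d\alpha_{2}$, together with Euler's relation $\partial_{w_{k}}\psi_{k}\cdot w_{k}=\MEV_{k}\cdot f_{k}=\psi_{k}+p_{k}\speed{sk}/T_{k}$, all the contributions not proportional to $d\alpha_{2}$ recombine and one gets
\[
  \omega \;=\; d\!\Bigl(\textstyle\sum_{k=1,2}\alpha_{k}\psi_{k}\Bigr)\;+\;\Bigl[\transfer{salpha}+\tfrac{p_{2}}{T_{2}}\speed{s2}-\tfrac{p_{1}}{T_{1}}\speed{s1}\Bigr]d\alpha_{2}.
\]
The first term being exact, $\omega$ is closed iff the bracket is a function of $\alpha_{2}$ alone; calling it $\fontscal{F}(\alpha_{2})$ yields~\eqref{eq:classic_cond_a} (with $\fontscal{F}$ an otherwise arbitrary function of $\alpha_{2}$, which we are free to take strictly convex), and simultaneously identifies the entropy flux $\ETF=-\sum_{k}\alpha_{k}\rho_{k}\PES_{k}\speed{sk}+\Phi(\alpha_{2})$, $\Phi'=\fontscal{F}$. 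Both directions of the equivalence in~\eqref{eq:classic_cond_a} come out at once.

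For the announced consequence, I would plug the value of $\transfer{salpha}$ just found into Condition~$(C_{2})$, $\partial_{\CV}\MES\,\NCF-\transfer{v}=\fontvec{0}$. As $\NCF$ has only its $\alpha_{2}$-column non-zero, only the first component is a genuine equation, $\MEValpha\speed{sint}+\MEV_{1}\cdot\NCFone+\MEV_{2}\cdot\NCFtwo-\transfer{salpha}=0$; inserting $\MEValpha=p_{1}/T_{1}-p_{2}/T_{2}$, $\NCFone=-\NCFtwo=(0,\pI,\pI\speed{sint})^{\trans}$, the explicit $\MEV_{k}$, and~\eqref{eq:classic_cond_a}, then gathering the $1/T_{1}$ and $1/T_{2}$ parts and factoring each into $(\pI-p_{k})(\speed{sk}-\speed{sint})$, delivers~\eqref{eq:classic_cond_b}.

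The delicate step is the passage displayed in the second paragraph: the substitution $\CV_{k}=\alpha_{k}w_{k}$ feeds $d\alpha_{2}$ into $d\CV_{1}$ and $d\CV_{2}$, and one must check that everything not proportional to $d\alpha_{2}$ reassembles into an exact form — which is precisely where degree-one homogeneity of the Euler flux and the attendant Euler relation for $\psi_{k}$ are indispensable, and where the ``spurious'' $\pm p_{k}\speed{sk}/T_{k}$ terms must be carried without a sign slip. Once $\transfer{salpha}$ is pinned down, Condition~$(C_{2})$ reduces to elementary algebra — the kind of bookkeeping the computer-algebra procedure of Section~\ref{ssec:NC_metho} is meant to automate.
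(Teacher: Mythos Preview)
Your argument is correct and takes a genuinely different route from the paper. The paper's own proof is a one-liner: ``The function $\transfer{salpha}$ is found relying on symbolic computation and it holds as a proof'' --- i.e.\ the Maple procedure of Section~\ref{ssec:NC_metho} is run and the output reported. You instead give an analytic derivation: you invoke the equivalence $(C_{2})\Leftrightarrow(C_{3})$ of Theorem~\ref{theo:NC_SCE_applied} to recast the symmetry condition as closedness of a $1$-form, then exploit the block structure of~\eqref{sys:BNZ_eq} and the degree-one homogeneity of the phasic Euler fluxes $f_{k}$ to reduce everything to the classical entropy pair $(\sigma_{k},\psi_{k})$ for Euler. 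The Euler relation $\partial_{w_{k}}\psi_{k}\cdot w_{k}=\psi_{k}+p_{k}\speed{sk}/T_{k}$ is exactly what makes the non-$d\alpha_{2}$ pieces reassemble into $d\bigl(\sum_{k}\alpha_{k}\psi_{k}\bigr)$, and your identification of the residual $d\alpha_{2}$-coefficient with an arbitrary function $\fontscal{F}(\alpha_{2})$ is precisely the Poincar\'e-lemma step. What your approach buys is transparency: one sees that~\eqref{eq:classic_cond_a} is nothing but the well-known Euler compatibility relation propagated through the volume-fraction scaling $\CV_{k}=\alpha_{k}w_{k}$, that strict convexity of $\fontscal{F}$ plays no role in the equivalence itself, and that the entropy flux $\ETF=-\sum_{k}\alpha_{k}\rho_{k}\PES_{k}\speed{sk}+\Phi(\alpha_{2})$ drops out for free. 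What the paper's approach buys is automation and uniformity: the same Maple routine handles the miscible case of Theorem~\ref{theo:BNZ_ab_mixing} and the plasma case of Theorem~\ref{theo:plasma_ab_classic} without having to spot model-specific structure. Your treatment of~\eqref{eq:classic_cond_b} is also fine --- once $\transfer{salpha}$ is known, Condition~$(C_{2})$ is indeed the single scalar equation you write, and the factoring into $(\pI-p_{k})(\speed{sk}-\speed{sint})$ is straightforward.
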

\begin{proof}
The function $\transfer{salpha}$ is found relying on symbolic computation and it holds as a proof.
\end{proof}

As explained in $(Step\ 2.a)$, Equation~\eqref{eq:classic_cond_a} guarantees the existence of an entropy flux $\ETF$ associated with the mixture entropy $\MES$ chosen as in Equation~\eqref{eq:mixture_entropy_immiscible} by defining the unknown function $\transfer{salpha}(\CV)$.

Then as described in $(Step\ 2.b)$, Equation~\eqref{eq:classic_cond_b} relates the interfacial terms $(\speed{sint}, \pI)$. By choosing $\fontscal{F}(\alpha_{2}) =0$, the condition on $\partial_{\CV} \MES \times \left[ \CFnsym + \NCFnsym \right]$ writes
\begin{align}\label{eq:classic_cond_b_reduced}
\sum_{k = 1,2} \frac{1}{T_{k}} \left(p_{k}-\pI \right) \left(\speed{sint}-\speed{sk}\right) = 0.
\end{align}
So now, to obtain a closed model along with a supplementary conservative equation, we can postulate an interfacial velocity $\speed{sint}$ and derive the corresponding $\pI$. We will limit ourselves to defining $\speed{sint}$ such that the field associated to $\speed{sint}$ is linearly degenerate. In that case, the only admissible interfacial velocities are $\speed{sint} = \beta u_{1} + (1-\beta) u_{2}$ with $\beta \in \left[ 0,1, \alpha_{1} \rho_{1}/\rho \right]$ \citeay{Coquel_2002}, \citeay{Lochon_PhdThesis_2016}. We will focus on the particular case where $\fontscal{F}(\alpha_{2}) =0$. We obtain the following results:
\begin{itemize}[label=$-$]
  \item If $\speed{sint} = \speed{sk}$, then Equation~\eqref{eq:classic_cond_b_reduced} returns $\pI = p_{k^{\prime}}$. $(\speed{sk},p_{k^{\prime}})$ is the closure proposed first by \citeay{Baer_Nunziato_1986}, \citeay{Kapila_1997}, \citeay{Bdzil_1999}, in the context of deflagration-to-detonation.
  \item If $\speed{sint} = \beta u_{1} + (1-\beta) u_{2}$ with $\beta = \alpha_{1} \rho_{1}/\rho$, then Equation~\eqref{eq:classic_cond_b_reduced} returns $\pI =\mu p_{1} + (1- \mu ) p_{2}$ with $\mu\left(\beta\right) =(1-\beta) T_{2} / ( \beta T_{1} + (1-\beta) T_{2})$. It is the closure found in \citeay{Lochon_PhdThesis_2016} among others.
\end{itemize}
We see that first these closures are a specific case where $F(\alpha_{2})$ is chosen to be zero in Equation~\eqref{eq:classic_cond_b}. Second, one could have chosen another interfacial velocity $\speed{sint}$ and it would have led to another interfacial pressure $\pI$ compatible with an entropy pair.

\begin{remark} If we had used the extended condition expressed in Equation~\eqref{eq:theo_b1b2_extended}, then the condition on $\partial_{\CV} \MES \left[ \CFnsym + \NCFnsym \right]$ would be
\begin{align}\label{eq:classic_cond_b_extended}
& \sum_{k = 1,2} \frac{1}{T_{k}} \left[p_{k}-\pI\left(\CV, \partial_{x} \CV \right) \right] \left[\speed{sint}\left(\CV, \partial_{x} \CV \right)-\speed{sk}\right] \partial_{x} \alpha_{k} \leq 0 \\
\Leftrightarrow \ & - \sum_{k = 1,2} \frac{1}{T_{k}} \frac{Z_{k}}{(Z_{1}+Z_{2})^{2}}\left[ p_{k^{\prime}}-p_{k} + sgn\left(\partial_{x} \alpha_{1} \right) (u_{k^{\prime}}-\speed{sk}) Z_{k^{\prime}} \right]^{2} \leq  0,
\end{align}
where $Z_{k}$ is defined by  $Z_{k} = \rho_{k} a_{k}$ with the phase sound speed $a^{2}_{k} = \left. \partial p_{k} / \partial \rho_{k}%
        \right|_{\scriptstyle s_{k}}$. From Equation~\eqref{eq:classic_cond_b_extended}, one sees that the dependency on $\partial_{x} \CV$ reduces to $\partial_{x} \alpha_{2}$ otherwise some terms would not be signable. Then closures such as the one found through Discrete Element Method (DEM) \citeay{Saurel_Gavrilyuk_2003} are obtained
\begin{align}
  \speed{sint} &= \frac{Z_{1} u_{1} + Z_{2} u_{2}}{Z_{1} +  Z_{2}} + sgn\left(\partial_{x} \alpha_{1} \right) \frac{p_{2}-p_{1}}{Z_{1}+Z_{2}}, \\
  \pI &= \frac{Z_{2} p_{1} + Z_{1} p_{2}}{Z_{1} +  Z_{2}} + sgn\left(\partial_{x} \alpha_{1} \right) \frac{Z_{1}Z_{2}}{Z_{1}+Z_{2}} \left(u_{2}-u_{1}\right).
\end{align}
\end{remark}
%
%
%
%
%
\subsection{Partially miscible phases entropy.}
Now, let us add a degree of freedom in the thermodynamics by introducing mixing terms in the definition of the entropy $\MES$ as in Equation~\eqref{eq:mixture_entropy_extended} to account for partial miscibility of the phases. The added terms, $\psi_{k}$, functions of the volume fraction $\alpha_{k}$ only, are to be determined.

The entropic variables $\MEV$ are
\begin{align}\label{eq:v_r_ln_alpha}
  \MEV = \begin{pmatrix}
      \sum\limits_{\text{\tiny $k{=}1,2$}} (-1)^{k+1} \dfrac{p_{k}}{T_{k}} \left[ 1 - \dfrac{\alpha_{k}}{r_{k}} \psi_{k}^{\prime}(\alpha_{k}) \right]\\
      \MEV_{2} \\
      \MEV_{1}
    \end{pmatrix} %
  \text{ with } %
  \MEV_{k} = \frac{1}{T_{k}}\begin{pmatrix} 
      g_{k} - \speed{sk}^{2}/2\\
      \speed{sk} \\
      -1 
    \end{pmatrix}
\end{align}

%
%
\begin{theorem}\label{theo:BNZ_ab_mixing}
Consider System~\eqref{sys:BNZ_eq}. If the mixture entropy is defined as $\MES = - \sum_{k=1,2} \alpha_{k} \rho_{k} \left[ \PES_{k} - \psi_{k}(\alpha_{k}) \right]$ with $\psi_{k}$, $k=\left[ 1,2 \right]$, two strictly convex arbitrary functions depending on the volume fraction, then with the decomposition proposed in Equations~\eqref{eq:final_entropic_condition_decomp_BNZ}, we have
\begin{align}\label{eq:mixture_cond_a}
\begin{IEEEeqnarraybox}[\IEEEeqnarraystrutmode
\IEEEeqnarraystrutsizeadd{1pt}{1pt}][c]{rl}
  & \partial_{\CV \CV} \MES \times \partial_{\CV} \CF  +  \partial_{\CV} \transfer{v} \text{ symmetric } \\
  \Leftrightarrow \, &\transfer{salpha}(\CV) = \fontscal{F}(\alpha_{2}) + \frac{p_{1}}{T_{1}}u_{1} \left[ 1 - \frac{\alpha_{1}}{r_{1}} \psi_{1}^{\prime}(\alpha_{1}) \right] - \frac{p_{2}}{T_{2}}u_{2} \left[ 1 - \frac{\alpha_{2}}{r_{2}} \psi_{2}^{\prime}(\alpha_{2}) \right] 
\end{IEEEeqnarraybox}
\end{align}
with $\fontscal{F}$ a strictly convex arbitrary function depending on the volume fraction. As a consequence the condition on $\partial_{\CV} \MES \left[ \CFnsym + \NCFnsym \right]$ gives
\begin{align} \label{eq:mixture_cond_b}
\begin{IEEEeqnarraybox}[\IEEEeqnarraystrutmode
\IEEEeqnarraystrutsizeadd{1pt}{1pt}][c]{rcl}
  & \fontvec{0} &= \partial_{\CV} \MES(\CV) \left[ \CFnsym(\CV) + \NCFnsym(\CV) \right] \\
 \Leftrightarrow \ &   0  &= - \fontscal{F}(\alpha_{2}) + \sum_{k=1,2} (-1)^{k+1} \alpha_{k}\rho_{k} \psi_{k}^{\prime}(\alpha_{k}) (u_{k}-\speed{sint})\\%
   &  &+ \sum_{k=1,2} \frac{(-1)^{k}}{T_{k}} (\pI - p_{k} )( \speed{sk} - \speed{sint})
\end{IEEEeqnarraybox}
\end{align}
\end{theorem}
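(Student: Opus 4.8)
\emph{Proof proposal.} The plan is to run the three-step methodology of Section~\ref{ssec:NC_metho} on System~\eqref{sys:BNZ_eq} with the entropy \eqref{eq:mixture_entropy_extended}, treating the present statement as a perturbation of Theorem~\ref{theo:BNZ_ab_classic}: the only new data are the two mixing terms $\psi_k(\alpha_k)$, which enter \emph{linearly} in $\alpha_k\rho_k$ and hence affect neither the $(\CV_k,\CV_k)$-blocks of the Hessian of $\MES$ nor the single-phase flux Jacobians, but do modify the $\alpha_2$-conjugate entropic variable $\MEValpha$ through the $\psi_k'(\alpha_k)$ terms already shown in \eqref{eq:v_r_ln_alpha}. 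First I would differentiate \eqref{eq:mixture_entropy_extended} to record $\MEV = (\partial_\CV\MES)^{\trans}$ as in \eqref{eq:v_r_ln_alpha}, using $\alpha_1 = 1-\alpha_2$ and the observation that, with the conservative variables $\CV_k = (\alpha_k\rho_k,\,\alpha_k\rho_k\speed{sk},\,\alpha_k\rho_k E_k)$, the product $\alpha_k p_k$ is a function of $\CV_k$ alone for the ideal-gas EOS; this is what organizes the whole computation, since it makes each single-phase pair $(\CV_k,\CF_k)$ a standard polytropic Euler system.

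Second I would assemble $M := \partial_{\CV\CV}\MES \times \partial_\CV\CF + \partial_\CV\transfer{v}$ with the chosen $\transfer{v} = (\transfer{salpha},\fontvec{0},\fontvec{0})$ and read off its structure from \eqref{sys:BNZ_eq}. Because $\partial_\CV\CF$ is block diagonal in $(\alpha_2 \mid \CV_2 \mid \CV_1)$ with a vanishing leading block and a vanishing first column, and because each phase contributes to $\MES$ the classical single-fluid entropy of its Euler block plus a term linear in $\alpha_k\rho_k$ (whose $\CV_k$-Hessian is zero), the phase sub-blocks of $M$ are symmetric by the classical entropy--flux compatibility, the two phase--phase cross blocks of $M$ vanish (no summand of $\MES$ depends on $\CV_1$ and $\CV_2$ at once), and the first column of $M$ is zero off the diagonal. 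Condition $(C_1)$ therefore collapses to the scalar identities $M_{1,j}=0$, $j=2,\dots,7$, which read $\partial_{\CV_k}\transfer{salpha} = -\,\partial_{\alpha_2}\!\big(\MEV_k^{\trans}\,\partial_{\CV_k}\CF_k\big)$, using that $\CF_k$ depends on $\CV_k$ alone. I would then check the Schwarz conditions for this prescribed gradient — they hold trivially, each right-hand side depending only on $\alpha_2$ — so $\transfer{salpha}$ integrates, uniquely up to an additive function $\fontscal F(\alpha_2)$ of the volume fraction, which we keep as an arbitrary (strictly convex, for admissibility) free function; carrying out the integration with the EOS produces \eqref{eq:mixture_cond_a}, which degenerates to \eqref{eq:classic_cond_a} when $\psi_1 = \psi_2 \equiv 0$.

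Third, with $\transfer{salpha}$ in hand, Condition $(C_2)$, namely $\partial_\CV\MES\,[\CFnsym + \NCFnsym] = \fontvec{0}$, is by \eqref{eq:final_entropic_condition_decomp_BNZ} the single scalar equation
\[
  -\transfer{salpha}(\CV) + \MEValpha\,\speed{sint} + \sum_{k=1,2} \MEV_k\cdot\NCFk = 0,
\]
into which I would substitute $\MEV_k = T_k^{-1}(g_k - \speed{sk}^2/2,\ \speed{sk},\ -1)$, $\NCFtwo = -\NCFone = (0,-\pI,-\pI\speed{sint})$, the expression for $\MEValpha$ from \eqref{eq:v_r_ln_alpha}, and \eqref{eq:mixture_cond_a}; grouping the coefficients of $(\speed{sint}-\speed{sk})$ phase by phase then yields \eqref{eq:mixture_cond_b}. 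I expect the main obstacle to be Step~2.a: with seven variables, the $\psi_k'(\alpha_k)$ contributions, and the $\alpha_1 = 1-\alpha_2$ dependence threaded through $p_k$, $T_k$, $g_k$ and $s_k$, verifying the symmetry of $M$ and integrating $\transfer{salpha}$ by hand is long and error-prone, so — exactly as in Theorem~\ref{theo:BNZ_ab_classic} — this is delegated to the Maple{\texttrademark} implementation of Corollary~\ref{coro:NC_syst_applied_metho}, whose explicit output for $\transfer{salpha}$ holds as the proof.
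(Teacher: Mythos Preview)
Your proposal is correct and follows the same route as the paper: the paper offers no separate proof for this theorem, relying implicitly on the argument of Theorem~\ref{theo:BNZ_ab_classic} (``the function $\transfer{salpha}$ is found relying on symbolic computation and it holds as a proof''), and you likewise reduce Conditions~$(C_1)$--$(C_2)$ of Corollary~\ref{coro:NC_syst_applied_metho} to the determination of $\transfer{salpha}$ and then delegate the explicit output to the Maple implementation. Your structural remarks---that the $\psi_k$ enter $\MES$ linearly in $\alpha_k\rho_k$ so the phase sub-blocks of the symmetry matrix are unchanged from the classical Euler case, and that the cross-phase blocks vanish---go beyond what the paper spells out and usefully explain \emph{why} the symmetry condition collapses to the first row/column identities, but the underlying method is the same.
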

Again, Equation~\eqref{eq:mixture_cond_a} guarantees the existence of an entropy flux $\ETF(\CV)$ conditioning the function $\transfer{salpha}(\CV)$ $(Step\ 2.a)$. The interfacial quantities $(\speed{sint},\pI)$ and $\psi_{k}$ are linked by Equation~\eqref{eq:mixture_cond_b} $(Step \ 2.b)$.

The difference with the previous case for immiscible phases is that there are two supplementary unknowns $\psi_{k}$, $k=1,2$. We thus are free to either postulate first an interfacial velocity $\speed{sint}$ and then derive the corresponding $\pI$ and $\psi_{k}$ or postulate first the functions $\psi_{k}$ and see what choices we have for the interfacial terms. In the following we investigate the two approaches.

\subsubsection{Interfacial closures impacting thermodynamics.}

Let us postulate $\speed{sint}$ and limit ourselves to the case $\fontscal{F}(\alpha_{2}) =0$. We will again seek a linearly degenerate field for $\speed{sint}$. In such case, the results in Table~\ref{table:mixing_entropy_cond_case12} are obtained.
\begin{table}[ht]
\renewcommand{\arraystretch}{1,5}
\centering
    \caption{Admissible\small{ }thermodynamics and model closures obtained by postulating\small{ }$\speed{sint}$}
    \label{table:mixing_entropy_cond_case12}
    \begin{tabular}{c|c|c|c}
& $\speed{sint}$ & $\pI$ & $\left( \psi_{k} , \psi_{k^{\prime}}\right)$ \\ \hline \hline
Case 1 &$\speed{sk}$ & $p_{k^{\prime}}$ & $\left( \psi_{k}, 0 \right)$ \\ \hline
Case 2 & \specialcell{$\beta u_{1} + (1-\beta) u_{2}$\\ $\beta = \alpha_{1} \rho_{1} / \rho$} & \specialcell{$\mu p_{1} + (1- \mu ) p_{2}$ \\ $\mu\left(\beta\right) = \frac{(1-\beta) T_{2}}{\beta T_{1} + (1-\beta) T_{2}}$} & $\psi_{k}(\alpha_{k})=\psi_{k^{\prime}}(\alpha_{k^{\prime}})$
    \end{tabular}
\renewcommand{\arraystretch}{1}
\end{table}

In Case 1 of Table~\ref{table:mixing_entropy_cond_case12}, $\psi_{k}$ can be interpreted as a configuration energy of phase $k$ as in \citeay{Baer_Nunziato_1986}, \citeay{Kapila_1997} \citeay{Bdzil_1999}, in the context of deflagration-to-detonation. It is a term defining an interaction of one phase with itself only. More importantly, Equation~\eqref{eq:mixture_cond_b} shows that it is not possible to include a configuration energy for each phase when choosing the closure $(\speed{sint},\pI)=(\speed{sk},p_{k^{\prime}})$.

In Case 2 of Table~\ref{table:mixing_entropy_cond_case12}, the condition on the mixing term introduced in Equation~\eqref{eq:mixing_term_condition_BNZ} by \citeay{Gallouet_2004} is recovered and the closures are the one stated in \citeay{Coquel_2002}. However, the condition on the mixing terms imposes a constraint on the volume fraction and thus on the flow topology. Since mixing of the phases should be able to occur disregarding the flow topology, these terms fail to introduce free mixing among the phases.
\subsubsection{Thermodynamics impacting interfacial term closures.}
Since Case 1 and Case 2 of Table~\ref{table:mixing_entropy_cond_case12} do not allow the phases to mix, let us choose first the thermodynamics of the system and induce the admissible interfacial terms.

It has been shown that the mixing entropy of an ideal compressible binary mixture is of the form $\sum_{k=1,2} \alpha_{k} \text{ln}(\alpha_{k})$
. Therefore, we choose to define the functions $\psi_{k}$ by $\psi_{k}(\alpha_{k}) =r_{k} \text{ln}(\alpha_{k})$. In this case, the entropy writes
\begin{align}
\MES = - \sum_{k=1,2} \alpha_{k} \rho_{k} \left[ \PES_{k} - r_{k}\text{ln}(\alpha_{k}) \right],
\end{align}
with $r_{k}$ the specific gas constant of phase $k$, we now account for quasi-miscibility between the phases.

The condition on $\transfer{salpha}$ degenerates, $\transfer{salpha} = \fontscal{F}(\alpha_{2})$ and the condition on $\partial_{\CV} \MES \left[ \CFnsym + \NCFnsym \right]$ is now
\begin{align}\label{eq:mixture_cond_b_rln}
\begin{IEEEeqnarraybox}[\IEEEeqnarraystrutmode
\IEEEeqnarraystrutsizeadd{1pt}{1pt}][c]{c}
  - \fontscal{F}(\alpha_{2}) + \pI \left( \frac{u_{1}-\speed{sint}}{T_{1}}-\frac{u_{2}-\speed{sint}}{T_{2}} \right) = 0.
\end{IEEEeqnarraybox}
\end{align}
It is no more possible to obtain the classic definition on $\speed{sint}$ and $\pI$. In the case $\fontscal{F}(\alpha_{2})=0$ two choices are possible to verify Equation~\eqref{eq:mixture_cond_b_rln} and summarized in Table~\ref{table:mixing_entropy_cond}.
\begin{table}[ht]
\renewcommand{\arraystretch}{1,5}
\centering
    \caption{Admissible\small{ }thermodynamics and model closures obtained by postulating\small{ }$\psi_{k}$}
    \label{table:mixing_entropy_cond}
    \begin{tabular}{c|c|c}
& $\speed{sint}$ & $\pI$ \\ \hline \hline
Case 3 & $\beta u_{1} + (1-\beta) u_{2}$ with $\beta = T_{2} / (T_{2}-T_{1})$ & no constraint \\
Case 4 & no constraint & 0
    \end{tabular}
\renewcommand{\arraystretch}{1}
\end{table}

Case 3 of Table~\ref{table:mixing_entropy_cond} proposes a temperature-based averaged velocity for $\speed{sint}$, which does not seem to be physically reasonable. In Case 4, the interfacial pressure must vanish for the system to admit an entropy supplementary conservation equation and the Baer-Nunziato model becomes a conservative system if one assumes the field associated to $\speed{sint}$ to be linearly degenerate. One knows how much it simplifies the problem in terms of numerical implementation. This result can be interpreted as an incompatibility between the existence of a mixing process in the thermodynamics of the mixture and an interfacial pressure, that stays meaningful as long as there is an interface between the two phases.
%
%
\subsubsection{Link with dispersed phase flow.}
When the thermodynamics accounts for mixing (Case~4 Table~\ref{table:mixing_entropy_cond}), 
the existence of an entropy supplementary conservative equation is incompatible with the interfacial pressure, and thus 
the nozzling terms $\pI \partial_{x} \volfrac{k}$ vanish.

In separated two-phase flows, these terms are known to be necessary to preserve uniformity in velocity and pressure of the flow during its temporal evolution \citeay{Andrianov_2003} and are usually compared to the terms obtained in a single gas with a variable section \citeay{Saurel_2001}. Whereas these arguments seem valid for separated two-phase flows, one may question the role these terms play in a dispersed phase flows.

Taking the particular case $\pI = 0$ and $p_{2}=0$ in the Baer-Nunziato model seems to lead to a system of equations similar to one that would describe a flow of incompressible suspended particles, where 1 would denote the carrier phase and 2 the dispersed phase. Doing so, one recovers not only the Marble model \citeay{Marble_1963}, which proposes a pressureless gas dynamic equations for the
particle phase, valid in the limit where $\alpha_{2} < 10^{-3}$, but also the model obtained by Sainsaulieu \citeay{Sainsaulieu_1995} in the asymptotic limit where the volume fraction of the particles $\alpha_{2} \rightarrow 0$.

Nevertheless, even if the partial differential equations are alike, the thermodynamics associated to Marble and Sainsaulieu models differ from the one we propose for the Baer-Nunziato model. The latter accounts for compressibility of the two phases and partial miscibility whereas the thermodynamics of the Marble model assumes incompressibility of the particles and non-miscibility between the two phases.

To conclude, if one aims at unifying the description of both separated phases and dispersed flow through a unique model, the thermodynamics must be treated together with the system modelling.
\section{Application to the plasma model}\label{sec:plasma}
The multicomponent fluid modelling of plasma flows out of thermal equilibrium has been derived rigorously from kinetic theory using a multi-scale Chapman-Enskog expansion mixing a hyperbolic scaling for the heavy species with a parabolic scaling for the electrons \citeay{Graille_2007}. The system takes the form
\begin{align}\label{sys:plasma_eq_full}
  \partial_{t} \CV + \left[ \partial_{\CV} \CF(\CV) + \NCF(\CV) \right] \partial_{x} \CV = \partial_{x} \left( \DF(\CV) \partial_{x} \CV \right),
\end{align}
with
\renewcommand\arraystretch{1}
\setlength\arraycolsep{8pt}
\begin{align}\label{eq:plasma_eq}
\partial_{\CV} \CF(\CV) &= \begin{pmatrix}
        0 & 1 & 0 & 0 & 0 \\
        (\kappa/2-1)\speed{s}^2   & (2-\kappa)\speed{s} & \kappa & 0 & 0\\
        (\kappa/2 \speed{s}^2 - \frac{h^{tot}}{\rho_{h}})\speed{s}  & \frac{h^{tot}}{\rho_{h}} - \kappa \speed{s}^2 & (1+\kappa)\speed{s} & 0 & 0\\
        -\frac{\rho_{e}}{\rho_{h}} \speed{s} & \frac{\rho_{e}}{\rho_{h}} & 0 & \speed{s} & 0 \\
        - \frac{\rho_{e} \epsilon_{e}}{\rho_{h}} \speed{s} & \frac{\rho_{e} \epsilon_{e}}{\rho_{h}} & 0 & 0 & \speed{s}
    \end{pmatrix}, \\
\NCF(\CV) &= \begin{pmatrix}
        0 & 0 & 0 & 0 & 0\\
        0 & 0 & 0 & 0 & 0\\
        0 & 0 & 0 & 0 & 0\\
        0 & 0 & 0 & 0 & 0\\
        -\frac{\rho_{e} \epsilon_{e}}{\rho_{h}} \kappa \speed{s} & \frac{\rho_{e} \epsilon_{e}}{\rho_{h}} \kappa & 0 & 0 & 0
      \end{pmatrix}, \\
\DF(\CV) &= \begin{pmatrix}
        0 & 0 & 0 & 0 & 0\\
        0 & 0 & 0 & 0 & 0\\
        0 & 0 & 0 & -\frac{\lambda \kappa \epsilon_{e}}{\rho_{e}}  & \frac{\lambda \kappa \epsilon_{e}}{\rho_{e}} + \gamma D\\
        0 & 0 & 0 & 0 & \frac{D\kappa}{T_{e}}\\
        0 & 0 & 0 & -\frac{\lambda \kappa \epsilon_{e}}{\rho_{e}} & \frac{\lambda \kappa \epsilon_{e}}{\rho_{e}} + \gamma D
      \end{pmatrix},
\end{align}
where the column vector $\CV \in \spaces{R}^{5}$ is defined by $\CV^{\trans} = \left( \rho_{h}, \rho_{h} \speed{s}, E, \rho_{e}, \rho_{e} \epsilon_{e}\right)$ with $\rho_{h}$ is the density of the heavy particles, $\speed{s}$ the hydrodynamic velocity, $E$ the total energy defined by $E= \rho_{h} \speed{s}^2/2 + \rho_{h} \epsilon_{h} + \rho_{e} \epsilon_{e}$, $ \epsilon_{h}$ the internal energy of the heavy particles,  $\rho_{e}$ the density of the electrons, $\epsilon_{e}$ the internal energy of the electrons, $h^{tot}$ the total enthalpy defined by $h^{tot}= E + p$ with $p=p_{h}+p_{e}$, $T_{e}$ the temperature of the electrons, the constant $\kappa$ defined by $\kappa = \gamma-1$ with $\gamma$ the isentropic coefficient, $p_{h}$ is the pressure of the heavy particles and $p_{e}$ is the pressure of the electrons. In the diffusive terms, $\lambda$ is the electron thermal conductivity, D the electron diffusion coefficient.

Concerning the thermodynamics, it can be obtained from kinetic theory. The electrons and the heavy particles thermodynamics are defined by an ideal gas equation of state, and they share both the same isentropic coefficient: $p_{h} = \kappa \rho_{h} \epsilon_{h}$, $p_{e} = \kappa \rho_{e} \epsilon_{e}$ where $p_{h}$ is the pressure of the heavy particles and $p_{e}$ is the pressure of the electrons, $r$ is the constant of the gas $r=c_{v} \kappa$ with $c_{v}$ the calorific heat at constant volume, the model being adimensionalized $r=c_{v}(\gamma-1)=1$.

The model is naturally hyperbolic \citeay{Graille_2007} and also involves second-order terms and eventually source terms \citeay{Magin_2009}. Here we considered the homogeneous form.

In this section, we would like to derive the usual entropy supplementary conservative equation found by \citeay{Graille_2007} and show that it is unique, to attest the effectiveness of the theory.
\subsection{Decomposition.}
We need to proceed to the decomposition of the conservative and non conservative terms of System~\eqref{sys:plasma_eq_full}. We restrict ourselves again to the decomposition proposed in Definition~\eqref{def:decomposition_applied} and we add a degree of liberty to each non-null non-conservative components by defining $\transfer{v}: \CV \in \Omega \mapsto \spaces{R5}$ as $\transfer{v}(\CV)^{\trans} = ( \transfer{s1}(\CV), \transfer{s2}(\CV), 0, 0, 0)$ such that the following decompositions are obtained
\begin{align}\label{eq:final_entropic_condition_decomp_plasma}
\left( \partial_{\CV} \MES(\CV) \left[ \CFsym(\CV) + \NCFsym(\CV) \right] \right)^{\trans} = \MEV(\CV) \cdot \partial_{\CV} \CF(\CV) + \begin{pmatrix}
    \transfer{s1}(\CV)\\
    \transfer{s2}(\CV)\\
    0 \\
    0 \\
    0
\end{pmatrix}, \\
\left( \partial_{\CV} \MES(\CV) \left[ \CFnsym(\CV) + \NCFnsym(\CV) \right] \right)^{\trans} =\begin{pmatrix}
    - \transfer{s1}(\CV) - \frac{\rho_{e}}{\rho_{h}} \left(1-\frac{T_{e}}{T_{h}}\right) \speed{s} \\
    - \transfer{s2}(\CV) + \frac{\rho_{e}}{\rho_{h}} \left(1-\frac{T_{e}}{T_{h}}\right) \\
    0 \\
    0 \\
    0
\end{pmatrix}.
\end{align}
\normalsize
The unknown scalar functions $\transfer{v}_{k}(\CV)$ give the possibility to fractions of the non-conservative terms to be given to the matrix $\Fsymk$.
\subsection{Ideal Gas entropy.}
The entropy $\MES: \CV \in \Omega \mapsto \spaces{R}$ for two perfect gases is defined as
\begin{align}\label{eq:mixture_entropy_plasma}
  \MES = - \rho_{h} \PES_{h} - \rho_{e} \PES_{e},
\end{align}
with the partial entropies defined by
\begin{align}
  \PES_{h}= c_{v} \, \text{ln}\left(\frac{p_{h}}{\kappa \rho_{h}^{\transfer{v}}}\right), && %
  \PES_{e}= c_{v} \, \text{ln}\left(\frac{p_{e}}{\kappa \rho_{e}^{\transfer{v}}}\right).
\end{align}
This entropy includes mixing between the electrons and the heavy particles. Thus, we start applying our method $(Step \ 1)$ by postulating $\MES$ as in Equation~\eqref{eq:mixture_entropy_plasma}. The entropic variables $\MEV$ are then
\begin{align}
  \MEV = \begin{pmatrix}
      \dfrac{1}{T_{h}}\left(g_{h} - \speed{s}^{2}/2\right)\\
      \dfrac{1}{T_{h}}\speed{s}\\
      -\dfrac{1}{T_{h}}\\
      \dfrac{1}{T_{e}} g_{e}\\
      \dfrac{1}{T_{h}}-\dfrac{1}{T_{e}}
    \end{pmatrix},
\end{align}
with $g_{k}$ the Gibbs free energy, $g_{k} =\epsilon_{k} + p_{k}/\rho_{k} - T_{k}s_{k}$.
\begin{remark} In the fourth component of the entropic variable, the kinetic energy of the electrons has vanished. This is due to the low-Mach assumption made for the electrons.
\end{remark}
 We now apply the conditions to determine $\transfer{sk}(\CV)$.
\begin{theorem}\label{theo:plasma_ab_classic}
Consider System~\eqref{sys:plasma_eq_full}. If the mixture entropy is defined as $\MES = - \rho_{h} \PES_{h} - \rho_{e} \PES_{e}$, then with the decomposition proposed in Equations~\eqref{eq:final_entropic_condition_decomp_plasma}, we have
\begin{align}\label{eq:plasma_classic_cond_a}
\begin{IEEEeqnarraybox}[\IEEEeqnarraystrutmode
\IEEEeqnarraystrutsizeadd{1pt}{1pt}][c]{rl}
  & \partial_{\CV \CV} \MES(\CV) \times \partial_{\CV} \CF(\CV) + \partial_{\CV} \transfer{v}(\CV) \text{ symmetric } \\
  \Leftrightarrow \  & \transfer{s1} (\CV) = \frac{\rho_{e}}{\rho_{h}}\left(1-\frac{T_{e}}{T_{h}}\right)\speed{s} \text{ and } \transfer{s2} (\CV) = -\frac{\rho_{e}}{\rho_{h}}\left(1-\frac{T_{e}}{T_{h}}\right)
\end{IEEEeqnarraybox},
\end{align}
and the condition on $\partial_{\CV} \MES(\CV) \left[ \CFnsym(\CV) + \NCFnsym(\CV) \right]$ is
\begin{align}\label{eq:plasma_classic_cond_b}
  \partial_{\CV} \MES(\CV) \left[ \CFnsym(\CV) + \NCFnsym(\CV) \right] & = (0,\, 0,\, 0,\, 0,\, 0).
\end{align}
\end{theorem}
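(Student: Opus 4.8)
The plan is to follow verbatim the three-step methodology of Section~\ref{ssec:NC_metho} applied to System~\eqref{sys:plasma_eq_full}, exactly as was done for the Baer--Nunziato system, so that the ``proof'' is in essence a symbolic computation. Step~1 is already carried out: since the thermodynamics is prescribed (two ideal gases sharing $\kappa$, adimensionalised so that $r=1$), one uses $E=\rho_h\speed{s}^2/2+\rho_h\epsilon_h+\rho_e\epsilon_e$, $p_h=\kappa\rho_h\epsilon_h$, $p_e=\kappa\rho_e\epsilon_e$ to express $T_h,T_e,p_h,p_e,g_h,g_e$ as explicit functions of $\CV^{\trans}=(\rho_h,\rho_h\speed{s},E,\rho_e,\rho_e\epsilon_e)$; this yields the entropic variables $\MEV$ displayed above and the Hessian $\partial_{\CV\CV}\MES(\CV)$.

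For Step~2.a I would assemble the $5\times5$ matrix $A:=\partial_{\CV\CV}\MES(\CV)\times\partial_{\CV}\CF(\CV)$ from $\partial_{\CV}\CF$ in Equation~\eqref{eq:plasma_eq} and impose Condition $(C_1)$, i.e.\ the symmetry of $M:=A+\partial_{\CV}\transfer{v}$ with $\transfer{v}^{\trans}=(\transfer{s1},\transfer{s2},0,0,0)$, so that $\partial_{\CV}\transfer{v}$ has possibly nonzero entries only in rows $1$ and $2$. The ten equalities $M_{ij}=M_{ji}$ ($i<j$) split into three families: for $i,j\in\{3,4,5\}$ they reduce to $A_{ij}=A_{ji}$ and hold identically (compatibility, on the purely conservative part, of the energy/electron block); for $i\in\{1,2\}$, $j\in\{3,4,5\}$ they read $\partial_{\CV_j}\transfer{s}_i=A_{ji}-A_{ij}$, determining the $\CV_3,\CV_4,\CV_5$-derivatives of $\transfer{s1},\transfer{s2}$; and $(i,j)=(1,2)$ gives $\partial_{\CV_2}\transfer{s1}-\partial_{\CV_1}\transfer{s2}=A_{21}-A_{12}$. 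Checking the Schwarz integrability of the prescribed partials and integrating then produces $\transfer{s1},\transfer{s2}$; unlike the Baer--Nunziato case no residual arbitrary function remains in the solution, and one obtains exactly $\transfer{s1}=\dfrac{\rho_e}{\rho_h}\Bigl(1-\dfrac{T_e}{T_h}\Bigr)\speed{s}$ and $\transfer{s2}=-\dfrac{\rho_e}{\rho_h}\Bigl(1-\dfrac{T_e}{T_h}\Bigr)$, which is Equation~\eqref{eq:plasma_classic_cond_a}.

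For Step~2.b I would feed these two scalars into Condition $(C_2)$, which by Definition~\ref{def:decomposition_applied} reads $\partial_{\CV}\MES(\CV)\,\NCF(\CV)-\transfer{v}(\CV)=\fontvec{0}$. Computing $\partial_{\CV}\MES\,\NCF$ directly --- its only possibly nonzero components are the first two, since $\NCF$ has a single nonzero row, and using $\kappa\epsilon_e=p_e/\rho_e$ --- one finds that it equals precisely the vector $\transfer{v}$ just obtained in Step~2.a; hence $(C_2)$ holds automatically and $\partial_{\CV}\MES(\CV)\,[\CFnsym(\CV)+\NCFnsym(\CV)]=(0,0,0,0,0)$, i.e.\ Equation~\eqref{eq:plasma_classic_cond_b}. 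By Corollary~\ref{coro:NC_syst_applied_metho} this establishes that System~\eqref{sys:plasma_eq_full} admits an entropy supplementary conservation law for this entropy, and uniquely so (no free function, in contrast with Baer--Nunziato); Step~3 then recovers the entropy flux $\ETF$ by integrating $\partial_{\CV}\MES\,\partial_{\CV}\CF+\transfer{v}=\partial_{\CV}\ETF$, which one checks coincides with the one of \citeay{Graille_2007}.

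The only real obstacle is computational bookkeeping, not concept: assembling $\partial_{\CV\CV}\MES$ and the product $A$ in the conservative variables is heavy because $\epsilon_h=(E-\rho_h\speed{s}^2/2-\rho_e\epsilon_e)/\rho_h$ couples the heavy block to the electron block through $E$ and $\rho_e\epsilon_e$, so the Hessian is dense and the block-$\{3,4,5\}$ symmetry identities and the Schwarz checks are not obvious by hand. This is exactly the step delegated to the Maple{\texttrademark} implementation; once $A$ is in hand, solving for $\transfer{s1},\transfer{s2}$ and verifying $(C_2)$ are short.
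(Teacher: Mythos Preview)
Your structural analysis of Step~2.a is correct, but the conclusion you draw from it is not: the symmetry of $M=A+\partial_{\CV}\transfer{v}$ does \emph{not} determine $\transfer{s1},\transfer{s2}$ uniquely. As you yourself note, the off-diagonal constraints only prescribe the six partials $\partial_{\CV_j}\transfer{s}_i$ for $i\in\{1,2\}$, $j\in\{3,4,5\}$, together with the single relation $\partial_{\CV_2}\transfer{s1}-\partial_{\CV_1}\transfer{s2}=A_{21}-A_{12}$. Nothing constrains $\partial_{\CV_1}\transfer{s1}$ or $\partial_{\CV_2}\transfer{s2}$, and only the \emph{difference} $\partial_{\CV_2}\transfer{s1}-\partial_{\CV_1}\transfer{s2}$ is fixed. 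Integrating the prescribed partials therefore leaves $\transfer{s1},\transfer{s2}$ determined only up to additive functions of $(\CV_1,\CV_2)=(\rho_h,\rho_h\speed{s})$; the paper's Maple computation makes this explicit, producing $\transfer{s2}$ up to an arbitrary $\fontscal{F}_1(\rho_h,\speed{s})$ and $\transfer{s1}$ up to $\int[-\speed{s}\,\partial_{\speed{s}}\fontscal{F}_1+\rho_h\,\partial_{\rho_h}\fontscal{F}_1]\,d\speed{s}+\fontscal{F}_2(\rho_h)$.

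Consequently your Step~2.b is also off: $(C_2)$ is \emph{not} automatic. In the paper's argument it is precisely $(C_2)$, i.e.\ $\partial_{\CV}\MES\,\NCF-\transfer{v}=\fontvec{0}$, that forces $\fontscal{F}_1=0$ (from the second component) and then $\fontscal{F}_2=0$ (from the first), after which one reinjects to obtain the specific $\transfer{s1},\transfer{s2}$ of Equation~\eqref{eq:plasma_classic_cond_a}. So the uniqueness you claim for the plasma case---and the ``contrast with Baer--Nunziato''---comes from combining $(C_1)$ \emph{and} $(C_2)$, not from $(C_1)$ alone. The remedy is simple: keep the arbitrary functions through Step~2.a, then show in Step~2.b that $(C_2)$ annihilates them; the remainder of your outline (Step~3 and the recovery of the flux of \citeay{Graille_2007}) is fine.
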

\begin{proof} Using Maple{\texttrademark}, we find
\begin{align*}
  &\partial_{\CV \CV} \MES(\CV) \times \partial_{\CV} \CF(\CV) + \partial_{\CV} \transfer{v}(\CV) \text{ symmetric } \nonumber \\
  \Leftrightarrow & \transfer{s1} (\CV) = \frac{\rho_{e}}{\rho_{h}}\left(1-\frac{T_{e}}{T_{h}}\right)\speed{s} + \int \left[ -\speed{s} \partial_{\speed{s}}\fontscal{F}_{1}(\rho_{h},\speed{s}) +\rho_{h} \partial_{\rho_{h}}\fontscal{F}_{1}(\rho_{h},\speed{s}) \right] d\speed{s} + \fontscal{F}_{2}(\rho_{h})  \nonumber \\
         & \text{and } \transfer{s2} (\CV) = -\frac{\rho_{e}}{\rho_{h}}\left(1-\frac{T_{e}}{T_{h}}\right) + \fontscal{F}_{1}(\rho_{h},\speed{s}),
\end{align*}
with $\mathsf{F_{1}}$, $\mathsf{F_{2}}$ two arbitrary functions and the condition on $\partial_{\CV} \MES(\CV) \left[ \CFnsym(\CV) + \NCFnsym(\CV) \right]$ is
\begin{align*}
  \left( \partial_{\CV} \MES(\CV) \left[ \CFnsym(\CV) + \NCFnsym(\CV) \right] \right)^{\trans} & = \begin{pmatrix}
    - \int \left[ -\speed{s} \partial_{\speed{s}}\fontscal{F}_{1}(\rho_{h},\speed{s}) +\rho_{h} \partial_{\rho_{h}}\fontscal{F}_{1}(\rho_{h},\speed{s}) \right] d\speed{s} - \fontscal{F}_{2}(\rho_{h}) \\
    -\fontscal{F}_{1}(\rho_{h},\speed{s}) \\
    0\\
    0\\
    0
  \end{pmatrix}\\
   &= \bs{0}.
\end{align*}
One sees that the last equation imposes first $\fontscal{F}_{1}=0$ and thus $\fontscal{F}_{2}=0$. Reinjecting these terms into the first equation gives the result.
\end{proof}

As explained in $(Step\ 2.a)$, the Equation~\eqref{eq:plasma_classic_cond_a} guarantees the existence of an entropy flux $\ETF:\CV \in \Omega \mapsto \spaces{R}$ associated with the entropy $\MES$ defined in Equation~\eqref{eq:mixture_entropy_plasma} by solving the unknown functions $\transfer{s1}(\CV)$ and $\transfer{s2}(\CV)$.

Therefore, for the entropy $\MES$ defined in Equation~\eqref{eq:mixture_entropy_plasma}, there is a unique decomposition which ensures the existence of a supplementary conservative equation which is given by
\begin{align}\label{eq:final_entropic_condition_decomp_plasma_determined}
\left( \partial_{\CV} \MES \left[ \CFsym + \NCFsym \right] \right)^{\trans}= \MEV^{\trans} \cdot \partial_{\CV} \CF(\CV) + \begin{pmatrix}
    \frac{\rho_{e}}{\rho_{h}}\left(1-\frac{T_{e}}{T_{h}}\right)v\\
    \frac{\rho_{e}}{\rho_{h}}\left(1-\frac{T_{e}}{T_{h}}\right)\\
    0 \\
    0 \\
    0
\end{pmatrix},
\end{align}
\begin{align}
\partial_{\CV} \MES \left[ \CFnsym + \NCFnsym \right] = \bs{0}.
\end{align}
It leads to the following entropy flux couple
\begin{align}
  \MES &= - \rho_{h} \PES_{h} - \rho_{e} \PES_{e}, \\
  \ETF &= - \left( \rho_{h} \PES_{h} + \rho_{e} \PES_{e} \right) \speed{s}.
\end{align}
The theory recovers the supplementary conservative equation already found in the literature from the kinetic theory \citeay{Graille_2007}.
\section{Conclusion}
In the present contribution, we have proposed a theoretical framework for the derivation of supplementary conservation laws for systems of partial differential equation including  first-order  non-conservative terms - commonly encountered in modeling of complex flows - thus extending the standard approach for systems of conservation laws. Since our main objective is deriving 
an  entropy supplementary conservation law, we have used this framework to make a first step to 
extend the theory of Godunov-Mock to such non-conservative systems.

Given a reasonable choice in the combination of the conservative and non-conservative terms, we have been able to show how to use the theory to design or analyze systems by means of computer algebra on two applications chosen for their numerous differences in terms of model and thermodynamics closure as well as the nature of the waves impacted by the non-conservative terms.

Firstly, applied to the Baer-Nunziato two-phase flow model derived from rational thermodynamics, the theory has brought about entropy supplementary conservative equations together with constraints on the interfacial quantities and the definition of the thermodynamics for non-miscible fluids and also when accounting for some level of mixing of the two phases. A new closure for the interfacial quantities has been proposed and leads to a conservative system.
Secondly, for a plasma model obtained rigorously from the kinetic theory of gases, where the thermodynamics is also provided, the approach allows to recover as unique the supplementary conservation equation related to the kinetic entropy and is thus assessed.
%

The content of the paper is a first step into studying the entropic symmetrization in the sense of Godunov-Mock and relation to source terms for two-phase flow modeling. Some partial symmetrization of the Baer-Nunziato model has been obtained in the classical framework by \cite{Forestier_2011}. Combining such symmetrization theory with source terms can then be envisioned such as in the case of plasma flows \cite{Magin_2009}, even if the symmetrization is only partial in the framework of \cite{Graille_2007} where the electron are considered in a low-Mach limit. Nevertheless, for such a study to be complete, several other steps have to be handled first: the question of the strict convexity of the entropy for the change of variable to be admissible and its relation to thermodynamics (a difficult question \cite{Coquel_2002,Gallouet_2004}); it is a part of Pierre Cordesse's PhD Thesis \cite{Cordesse_PhD}. This loss of strict convexity in the framework of non-interacting thermodynamics has been investigated in \citeay{Cordesse_CMT_2019} where a mixing thermodynamics for multi-fluids has been developed. Based on this new developments, we hope that equipping the Baer-Nunziato system with an extended thermodynamics closure will lead to a strictly convex entropy and thus allow the study of entropic full symmetrization and source terms, in the spirit of \cite{Giovangigli_1998,Massot_2002,Giovangigli_2004,Magin_2009}. This is the subject of our current research.




\section*{Acknowledgment}
The authors would like to acknowledge the support of a CNES/ONERA PhD Grant for P. Cordesse and the help of M. Th\'eron (CNES). They would like to express their special thanks to F. Coquel, S.~Kokh, V.~Giovangigli and A.~Murrone for their invaluable help and numerous pieces of advice during the writing of the paper. We also would like to thank discussions with J.M. H\'erard, which prompted this research path. Part of this work was conducted during the Summer Program 2018 at NASA Ames Research Center and the support and help of Nagi N. Mansour is also gratefully acknowledged.
%
\printbibliography
%
%
\end{document}